\theoremstyle{remark}
\newtheorem{lemma}{Lemma}[section]
\newtheorem{corollary}[lemma]{Corollary}
\newtheorem{theorem}[lemma]{Theorem}
\newtheorem{proposition}[lemma]{Proposition}
\DeclareMathOperator{\sk}{skel}
\DeclareMathOperator{\rk}{rank}
\DeclareMathOperator{\inv}{inv}
\DeclareMathOperator{\bd}{bdry}
\DeclareMathOperator{\ids}{ids}
\DeclareMathOperator{\vals}{vals}
\DeclareMathOperator{\id}{id}
\newcommand{\var}[1]{\lstinline+#1+}
\newcommand{\ang}[1]{\langle{#1}\rangle}
\newcommand{\set}[1]{\left\{ #1 \right\}}
\newcommand{\cA}{\ensuremath{\mathcal{A}}}
\newcommand{\cC}{\ensuremath{\mathcal{C}}}
\newcommand{\cG}{\ensuremath{\mathcal{G}}}
\newcommand{\cI}{\ensuremath{\mathcal{I}}}
\newcommand\cK{\ensuremath{\mathcal{K}}}
\newcommand{\cL}{\ensuremath{\mathcal{L}}}
\newcommand{\cO}{\ensuremath{\mathcal{O}}}
\newcommand{\cP}{\ensuremath{\mathcal{P}}}
\newcommand{\cS}{\ensuremath{\mathcal{S}}}
\title{An Equivariance Theorem with Applications to Renaming\\
(Preliminary Version)}
\author{
Armando Casta\~neda%
	\thanks{Instituto de Matem\'aticas,
		Universidad Nacional Aut\'onoma de M\'exico,
		Ciudad Universitaria, D.F.~04510,
		Mexico;
		acastanedar@uxmcc2.iimas.unam.mx.}
	\and
Maurice Herlihy%
	\thanks{Brown University,
		Computer Science Department,
		Providence, RI~02912;
		mph@cs.brown.edu. Supported by NSF 000830491.}
	\and
Sergio Rajsbaum%
	\thanks{Instituto de Matem\'aticas,
		Universidad Nacional Aut\'onoma de M\'exico,
		Ciudad Universitaria, D.F.~04510,
		Mexico;
		rajsbaum@math.unam.mx. Supported by UNAM-PAPIIT}
}
\begin{document}

\begin{titlepage}
\maketitle

\begin{abstract}
In the renaming problem,
each process in a distributed system is issued a unique name from a large name space,
and the processes must coordinate with one another to choose unique names from a much smaller name space.

We show that lower bounds on the solvability of renaming in an asynchronous
distributed system can be formulated as a purely topological question about
the existence of an equivariant chain map from a ``topological disk'' to a
``topological annulus''.
Proving the non-existence of such a map implies the non-existence of a
distributed renaming algorithm in several related models of computation.
\end{abstract}
~\\ 


\thispagestyle{empty} 
\end{titlepage}

\section{Introduction}
In the \emph{$M$-renaming} task,
each of $n+1$ processes is issued a unique name taken from a large namespace,
and after coordinating with one another,
each chooses a unique name taken from a (much smaller) namespace of size $M$.
Processes are \emph{asynchronous}
(there is no bound on their relative speeds),
and potentially \emph{faulty}
(any proper subset may halt without warning).
Assuming processes communicate through a shared read-write memory,
for which values of $M$ can we devise a protocol that ensures that all
non-faulty processes choose unique names?

To rule out trivial solutions,
we require that any such protocol be \emph{anonymous}:
informally stated,
in any execution,
the name a process chooses can depend only on the name it was originally issued
and how its protocol steps are interleaved with the others.

This problem was first proposed by Attiya et al.~\cite{abdprjacm90},
who provided a protocol for $M = 2n+1$,
and showed that there is no protocol for $M=n+2$.
Later, Herlihy and Shavit~\cite{hsjacm99} used chain complexes,
a construct borrowed from Algebraic Topology,
to show impossibility for $M = 2n$.
Unfortunately, this proof,
and its later refinements~\cite{arsiam02, hsjacm99, hrmscs00},
had a flaw:
because of a calculation error,
the proof did not apply to certain ``exceptional'' dimensions satisfying a
number-theoretic property described below.
Casta\~neda and Rajsbaum~\cite{crpodc08} provided a new proof based on
combinatorial properties of black-and-white simplicial colorings,
and were able to show that in these ``exceptional" dimensions,
and only for them, protocols do exist for $M = 2n-1$.
Nevertheless, this later proof was highly 
specialized for the weak symmetry breaking task,
a task equivalent to $2n$-renaming, 
so it was difficult to compare it directly to earlier proofs,
either for renaming, or for other distributed problems.
In the \emph{weak symmetry breaking} task~\cite{grhdisc06, hsjacm99}, 
each of $n+1$ processes chooses a binary output value, $0$ or $1$, such that
there is no execution in which the $n+1$ processes choose the same value.

The contribution of this paper is to formulate the complete renaming proof
entirely in the language of Algebraic Topology,
using chain complexes and chain maps.
While this proof requires more mathematical machinery than the specialized
combinatorial arguments used by Casta\~neda and Rajsbaum, 
the chain complex formalism is significantly more general.
While earlier work has focused on protocols for an asynchronous model where but
one process may fail (``wait-free'' protocols),
the chain complex formalism applies to any model where one can compute the
connectivity of the ``protocol complexes'' associated with that model.
This approach has also proved broadly applicable to a range
of other problems in Distributed Computing~\cite{hrtpodc98,hrmscs00}.
In this way,
we incorporate the renaming task in a broader framework of distributed problems.

As in earlier work~\cite{hrtpodc98,hrmscs00},
the existence (or not) of a protocol is equivalent to the existence of a certain
kind of chain map between certain chain complexes.
Here, we replace the \emph{ad-hoc} conditions used by prior work~\cite{hsjacm99, hrmscs00}
to capture the informal notion of anonymity with
the well-established mathematical notion of \emph{equivariance}.
Roughly speaking, a map is equivariant if it commutes with actions of a group
(in this case, the symmetric group on the set of process IDs).
We prove a purely topological theorem characterizing when there exists an
equivariant map between the chain complexes of an $n$-simplex and the chain
complexes of an annulus.
The desired map exists in dimension $n$ if and only if the binomial coefficients
${n+1 \choose 1}, \hdots, {n+1 \choose n}$
are relatively prime.
These are exactly the dimensions for which renaming is possible for $M = 2n$.

\section{Distributed Computing}
\label{sec:distributed}
We consider a distributed system of $n+1$ processes
with distinct IDs taken from $[n] = \{ 0, \hdots, n \}$.
Processes are \emph{asynchronous}: there is no restriction on their relative speeds.
They communicate by writing and reading a shared memory. 
A \emph{task} is a distributed problem where each process is issued a private \emph{input value},
communicates with the other processes,
and after taking a bounded number of steps,
chooses a private \emph{output value} and halts.
A \emph{protocol} is a distributed program that solves a task.
A protocol is \emph{$t$-resilient} if it tolerates crash failures by $t$ of fewer processes,
and it is \emph{wait-free} if it tolerates crash failures by $n$ out of $n+1$ processes.

We model tasks and distributed systems using notions from combinatorial
topology~\cite{arsiam02, hsjacm99}. 
An initial or final state of a process is modeled as a \emph{vertex},
a pair consisting of a process ID and a value (either input or output).
We speak of the vertex as \emph{colored} with the process ID.
A set of $d + 1$ mutually compatible initial or final states is modeled as a
\emph{$d$-dimensional simplex}, or \emph{$d$-simplex}.
It is \emph{properly colored} if the process IDs are distinct.
A nonempty subset of a simplex is called a \emph{face}. 
An $n$-simplex has ${n+1 \choose i+1}$ faces of dimension $i$.

The complete set of possible initial (or final) states of a distributed task
is represented by a set of simplexes, closed under containment,
called a \emph{simplicial complex}, or \emph{complex}. 
The \emph{dimension} of a complex $\cK$ is the dimension of a simplex of largest dimension in $\cK$. 
We sometimes use superscripts to indicate dimensions of simplexes and complexes. 
The set of process IDs associated with a simplex $\sigma^n$ is denoted by $\ids(\sigma^n)$, 
and the set of values by $\vals(\sigma^n)$.
Sometimes we abuse notation by using $\sigma$ to stand for the complex
consisting of $\sigma$ and its faces.
The \emph{boundary complex} $\bd \sigma$ is the complex consisting of proper
faces of $\sigma$.
For a complex $\cK$, its \emph{$i$-skeleton}, denoted $\sk^i(\cA)$, is the complex
containing all simplexes of $\cA$ of dimension at most $i$.

Any simplicial complex has a geometric realization as a point set in a Euclidean space.
A vertex corresponds to a point, and a simplex to the convex hull of affinely-independent vertexes.
A complex corresponds to the union of its geometric simplexes,
where any two geometric simplexes intersect either in a common face, or not at all.

A \emph{task} for $n+1$ processes consists of an \emph{input complex} $\cI^n$,
and output complex $\cO^n$ and a map $\triangle$ carrying each input $n$-simplex of $\cI^n$
to a set of $n$-simplexes of $\cO^n$.
This map associates with each initial state of the system (an input $n$-simplex)
the set of legal final states (output $n$-simplexes).
It is convenient to extend $\triangle$ to simplexes of lower dimension: 
$$\triangle(\sigma^m) = \cap \triangle(\sigma^n)$$
where $\sigma^n$ ranges over all $n$-simplexes containing $\sigma^m$. This definition has the following
operational interpretation: $\triangle(\sigma^m)$ is the set of legal final states in executions where only
$m+1$ out of $n+1$ processes participate (the rest fail without taking any steps). A protocol
\emph{solves} a task if when the processes run their programs, they start with mutually compatible
input values, represented by a simplex $\sigma^n$, communicate with one another, and eventually
halt with some set of mutually compatible output values, representing a simplex in $\triangle(\sigma^n)$.

Any protocol has an associated \emph{protocol complex} $\cP$,
in which each vertex is labeled with a process id and that process's
final state (called its \emph{view}).
Each simplex thus corresponds to an equivalence class of executions that
``look the same'' to the processes at its vertexes.
The protocol complex corresponding to executions starting from an input simplex
$\sigma^m$ is denoted ${\cP}(\sigma^m)$.

A \emph{vertex map} carries vertexes of one complex to vertexes of another.
A \emph{simplicial map} is a vertex map that preserves simplexes.
A simplicial map on properly colored complexes is \emph{color-preserving} if it associates vertexes of the same color.
Let $\cP$ be the protocol complex for a protocol. 
A protocol \emph{solves} a task $\langle \cI^n, \cO^n, \triangle \rangle$ if and only if there exists
a color-preserving simplicial map $\delta : \cP \to \cO^n$, 
called a \emph{decision map}, such that for every $\sigma^m \in \cI^n$, 
$\delta(\cP(\sigma^m)) \subset \triangle(\sigma^m)$.
We prove our impossibility results by exploiting the topological properties of the protocol complex and
the output complex to show that no such map exists.

\section{Algebraic Topology}
Here is a review of some basic notions of algebraic topology
(see Munkres~\cite{Munkres}, Hatcher~\cite{Hatcher} or Dieck~\cite{Dieck}).

Let $\sigma = \{ v_0, v_1, \hdots, v_q \}$ be a simplex.
An \emph{orientation} of $\sigma$ is a set consisting of a sequence 
of its vertexes and all even permutations of this sequence. 
If $n > 0$ then these sets fall into two equivalence classes, the sequence 
$\langle v_0 v_1 \hdots v_n \rangle$ and its even permutations,
and $\langle v_1 v_0 \hdots v_n \rangle$ and its even permutations.
Simplexes are oriented in increasing subscript order unless stated otherwise.

A \emph{$q$-chain} for a complex $\cK$ is a formal sum of oriented $q$-simplexes:
$\sum_{j=0} \lambda_j \sigma^q_j$, where $\lambda_j$ is an integer.
Simplexes with zero coefficients are usually omitted,
unless they are all zero, when the chain is denoted $0$. 
We write $1 \cdot \sigma^q$ as $\sigma^q$ and
 $-1 \cdot \sigma^q$ as $-\sigma^q$. 
For $q > 1$, $-\sigma^q$ is identified with $\sigma^q$ having the opposite orientation.
The $q$-chains of $\cK$ form a free Abelian group under component-wise addition,
called the \emph{$q$-th chain group} of $\cK$, denoted $\cC_q(\cK)$.
For dimension $-1$, we adjoin the infinite cyclic group $\mathbb{Z}$, $\cC_{-1}(\cK) = \mathbb{Z}$.

A \emph{boundary} operator $\partial_q : \cC_q(\cK) \to \cC_{q-1}(\cK)$ 
is a homomorphisms that satisfies 
$$\partial_{q-1} \partial_q \alpha = 0$$
and an \emph{augmentation} $\partial_0 : \cC_0(\cK) \to \cC_{-1}(\cK)$ 
which is an epimorphism.
For an oriented simplex $\sigma = \{ v_0, v_1, \hdots, v_q \}$, let $face_j(\sigma)$ be the $(q-1)$-face 
of $\sigma$ without vertex $v_j$, i.e.,  $face_j(\sigma) = \{ v_0, \hdots, \hat{v_j}, \hdots, v_q \}$,
where circumflex ($\, \, \widehat{ } \, \,$) denotes omission.
For $q > 0$, the usual boundary operator 
$\partial_q : \cC_q(\cK) \to \cC_{q-1}(\cK)$ is defined on simplexes: 
$$\partial_q \sigma = \sum^q_{j=0} (-1)^j face_j(\sigma)$$
Boundary $\partial_q$ extends additively to chains: $\partial_q(\alpha + \beta) = \partial_q \alpha + \partial_q \beta$.
For $q=0$, $\partial_0(v) = 1$ and extend linearly. We sometimes omit subscripts from boundary operators.

A $q$-chain $\alpha$ is a \emph{boundary} if $\alpha = \partial \beta$ for some $(q+1)$-chain $\beta$,
and it is a \emph{cycle} if $\partial \alpha = 0$. Since $\partial \partial \alpha = 0$, every boundary is a cycle.

The \emph{chain complex} $\cC(\cK)$ of $\cK$, is the sequence of groups and homomorphisms 
\begin{equation*}
\set{ \cC_q(\cK), \partial_q }.
\end{equation*}
Let $\{ \cC_q(\cK), \partial_q \}$ and $\{ \cC_q(\cL), \partial'_q \}$
 be chain complexes for $\cK$ and $\cL$.
A \emph{chain map} $\phi$ is a family of homomorphisms 
$\phi_q: \cC_q(\cK) \to \cC_q(\cL)$,
that satisfies $\partial'_q \circ \phi_q = \phi_{q-1} \circ \partial_q$.
Therefore, $\phi_q$ preserves cycles and boundaries. 
That is, if $\alpha$ is a $q$-cycle ($q$-boundary) of $\cC_q(\cK)$,
$\phi_q(\alpha)$ is a $q$-cycle ($q$-boundary) of $\cC_q(\cL)$.
Any simplicial map $\mu: \cK \to \cL$ induces a chain map 
$\mu_\#: \cC(\cK) \to \cC(\cL)$.
(For brevity, $\mu$ denotes both the simplicial map and $\mu_\#$.)
Similarly, any subdivision induces a chain map.

Let $\cK$ and $\cL$ be properly-colored complexes.
A chain map $\phi: \cC(\cK) \to \cC(\cL)$
is \emph{color-preserving} if each $\tau \in a(\sigma)$ is properly colored
with the colors of $\sigma$.

Let $\cG$ be a finite group and $\cC(\cK)$ be a chain complex.
An \emph{action} of $\cG$ on $\cC(\cK)$ is a set 
$\Phi = \{ \phi_g \vert g \in \cG \}$ of chain maps 
$\phi_g: \cC(\cK) \rightarrow \cC(\cK)$ such that
for the unit element $e \in \cG$, 
$\phi_e$ is the identity, and for all $g, h \in \cG$,
$\phi_g \circ \phi_h = \phi_{gh}$. For clarity, we write 
$g(\sigma)$ instead of $\psi_g(\sigma)$.
The pair $(\cC(\cK), \Phi)$ is a \emph{$\cG$-chain complex}.
When $\Phi$ is understood, we just say that $\cC(\cK)$ is a
$\cG$-chain complex. 

Consider two $\cG$-chain complexes 
$(\cC(\cK), \Phi)$ and $(\cC(\cL), \Psi)$.
Suppose we have a family of homomorphisms 
$$\mu_q: \cC_q(\cK) \rightarrow \cC_p(\cL)$$
possibly $q \neq p$. We say that $\mu = \{ \mu_q \}$ is \emph{$G$-equivariant},
or just \emph{equivariant} when $\cG$ is understood, if 
$\mu \circ \phi_g = \psi_g \circ \mu$
for every $g \in G$.
This definition can be extended to a family of homomorphisms as follows.
For each dimension  each $q$
suppose we have a family of homomorphisms
$$\mu^1_q, \hdots, \mu^{i_q}_q: \cC_q(\cK) \to \cC_p(\cL)$$
We say that $\mu = \{ \mu^{i_q}_q \}$ is \emph{$G$-equivariant} if 
for every $g \in G$ and for every $\mu^i \in \mu$,
$\mu^j \circ \phi_g = \psi_g \circ \mu^i$ for some $\mu^j \in \mu$.

Let $\cS_n$ be the symmetric group consisting of all permutations of $[n]$.
Henceforth, ``equivariant'' means ``$\cS_n$-equivariant'',
where the value of $n$ should be clear from context.

\section{Weak Symmetry-Breaking}
It is convenient to reduce the $2n$-renaming problem to the following
equivalent~\cite{grhdisc06} but simplified form.
In the \emph{weak symmetry-breaking} (WSB)  task~\cite{grhdisc06, hsjacm99},
the processes start with trivial inputs,
and must choose $0$ or $1$ such that  not all decide $0$ and not all decide $1$.
Just as for renaming,
to rule out trivial solutions any protocol for WSB must be anonymous.

\begin{figure}[ht]
    \begin{center}
        \includegraphics[scale=.65]{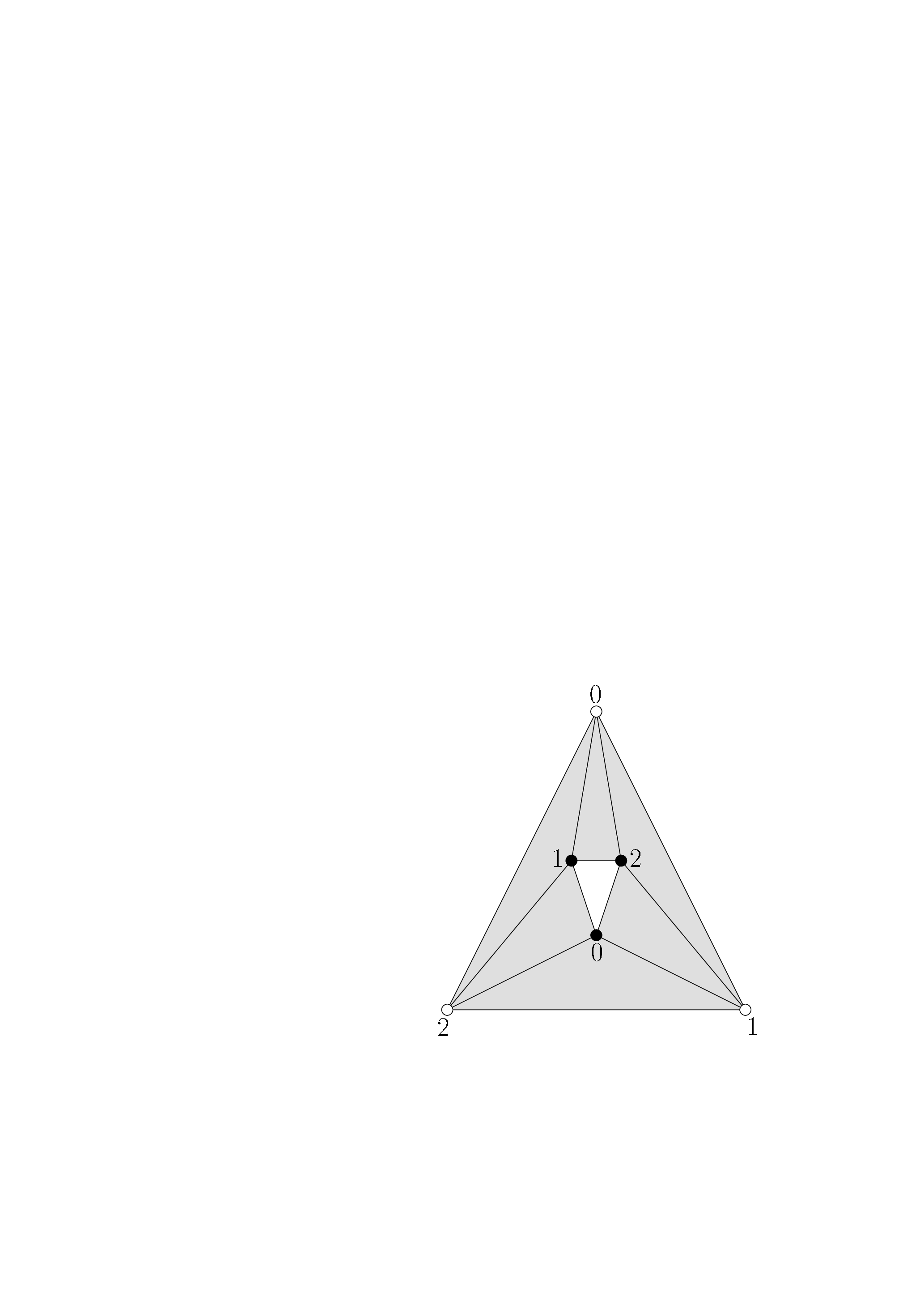}
        \caption{The annulus of dimension $2$.}
		\label{fig:annulus}
    \end{center}
\end{figure}

We are interested in two complexes: the input and output complexes for weak symmetry-breaking.
Topologically, the input complex a combinatorial disk (a single simplex),
while the output complex is a combinatorial annulus (a disk with a hole).
More precisely,
the input complex is a single $n$-simplex $\sigma^n$ properly colored with
$[n]$ and its faces.
For brevity, we use $\sigma^n$ to refer to this complex.
Let $\langle i_0 i_1 \hdots i_j \rangle$ denote the oriented face of $\sigma^n$
with colors $i_0, i_1, \hdots, i_j$ and with the orientation that contains the 
sequence $\langle i_0 i_1 \hdots i_j \rangle$.
Clearly, $\cC(\sigma^n)$ is a $\cS_n$-chain complex:
for each $\pi \in \cS_n$, 
$\pi(\ang{i_0 i_1 \hdots i_j}) = \langle \pi(i_0) \pi(i_1) \hdots \pi(i_j) \rangle$.

The output complex $\cA^n$ is defined as follows.
Each vertex has the form $(P_i,b_i)$,
where $P_i$ is a process ID and $v_i$ is 0 or 1.
A set of vertexes $\set{(P_0, v_0), \ldots, (P_j, v_j)}$ defines a
simplex of $\cA^n$ if the $P_i$ are distinct,
and if $j=n$ then the $b_i$ are not all 0 or all 1.
This complex is an \emph{annulus} (Figure~\ref{fig:annulus}).
Clearly, that $\cC(\cA^n)$ is a $\cS_n$-chain complex: for each $\pi \in \cS_n$, 
$\pi(\ang{({P_0, b_0}) \ldots (P_j, b_j)}) = \ang{(\pi(P_0), b_0) \hdots (\pi(P_j), b_j)}$.

\section{An Equivariance Theorem}
As explained below,
the existence of a protocol for WSB is tied to the existence of
an equivariant chain map from the disk to the annulus.
\begin{theorem}\label{theo}
There exists a non-trivial color-preserving equivariant chain map
$a: \cC(\sigma^n) \to \cC(\cA^n)$ 
if and only if the binomial coefficients
${n+1 \choose 1}, \hdots, {n+1 \choose n}$ are relatively prime.
\end{theorem}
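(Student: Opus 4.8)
The plan is to show that color-preservation and equivariance force $a$ into a rigid ``symmetric'' normal form, to turn the chain-map identity on that form into a short linear recurrence, to solve the recurrence, and to read off the criterion from B\'ezout's identity.

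\emph{Normal form.} Fix a $q$-face $\rho=\ang{i_0\cdots i_q}$ of $\sigma^n$. By color-preservation $a_q(\rho)$ is an integer combination of $q$-simplexes of $\cA^n$ with color set $\{i_0,\dots,i_q\}$, and each such simplex has the form $\ang{(i_0,b_0)\cdots(i_q,b_q)}$ for a bit vector $b\in\{0,1\}^{q+1}$. First I would apply the permutations of $[n]$ that fix $\{i_0,\dots,i_q\}$ setwise: such a permutation multiplies $\rho$ by $\pm1$ and permutes these simplexes with the same signs, so equivariance forces the coefficient of $\ang{(i_0,b_0)\cdots(i_q,b_q)}$ to depend only on the weight $|b|=b_0+\cdots+b_q$. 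Then, since $\cS_n$ is transitive on the $(q+1)$-subsets of $[n]$, equivariance makes this coefficient independent of $\rho$. Writing
\[
c_k(\rho)\;=\;\sum_{|b|=k}\ang{(i_0,b_0)\cdots(i_q,b_q)}\ \in\ \cC_q(\cA^n),
\]
we obtain $a_q(\rho)=\sum_k m^{(q)}_k\,c_k(\rho)$ for integers $m^{(q)}_k$ depending only on $q$ and $k$. Since the only $n$-simplexes absent from $\cA^n$ are the ``all-$0$'' and ``all-$1$'' ones, at the top dimension $m^{(n)}_0=m^{(n)}_{n+1}=0$; put $m_k:=m^{(n)}_k$. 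Finally $a_{-1}\colon\cC_{-1}(\sigma^n)=\mathbb{Z}\to\mathbb{Z}=\cC_{-1}(\cA^n)$ is multiplication by an integer $d$.

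\emph{The recurrence.} A direct boundary computation gives, for every face $\rho$,
\[
\partial\,c_k(\rho)\;=\;\sum_j(-1)^j\bigl(c_{k-1}(face_j(\rho))+c_k(face_j(\rho))\bigr),
\]
the two summands recording whether the deleted vertex carried a $1$ or a $0$. Substituting the normal form into $\partial a_q(\rho)=a_{q-1}(\partial\rho)$ and comparing coefficients of the chains $c_l(face_j(\rho))$ — these are linearly independent, since a codimension-one simplex of $\cA^n$ inside $\rho$ determines both the color it omits ($\Rightarrow j$) and its weight ($\Rightarrow l$) — yields
\[
m^{(q-1)}_l\;=\;m^{(q)}_l+m^{(q)}_{l+1}\qquad(0\le l\le q),
\]
while the augmentation condition $\partial_0 a_0=a_{-1}\partial_0$ says precisely $d=m^{(0)}_0+m^{(0)}_1$.

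\emph{Solving and concluding.} Iterating the recurrence from level $n$ down to level $0$ gives $m^{(0)}_l=\sum_{t}\binom{n}{t}m^{(n)}_{l+t}$, so, using $m_0=m_{n+1}=0$ and Pascal's rule,
\[
d\;=\;m^{(0)}_0+m^{(0)}_1\;=\;\sum_{t}\Bigl(\binom{n}{t}+\binom{n}{t-1}\Bigr)m_t\;=\;\sum_{t=1}^{n}\binom{n+1}{t}\,m_t .
\]
Conversely, given any integers $m_1,\dots,m_n$, I would set the top level to $(0,m_1,\dots,m_n,0)$, propagate it downward by the recurrence, and \emph{define} $a$ on each $q$-face by $a_q(\rho):=\sum_k m^{(q)}_k c_k(\rho)$ (and $a_{-1}:=$ multiplication by $d$); every step above reverses, and the excluded $n$-simplexes never occur because $m_0=m_{n+1}=0$, so this is a color-preserving equivariant chain map. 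Since non-triviality of $a$ means that $a_{-1}$ is an isomorphism, i.e.\ $d=\pm1$, such a map exists if and only if $1$ lies in the subgroup of $\mathbb{Z}$ generated by $\binom{n+1}{1},\dots,\binom{n+1}{n}$, which by B\'ezout's identity holds if and only if these binomial coefficients are relatively prime.

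\emph{Main obstacle.} The delicate part is the normal-form step: deducing from equivariance that the coefficient of each simplex depends only on its weight and not on its underlying face, where one has to keep careful track of the signs produced by permutations that stabilize a face's vertex set but reverse its orientation (equivalently, that permute the bit vectors with signs). After that, the remaining work is the bookkeeping of the recurrence together with one invocation of B\'ezout's identity.
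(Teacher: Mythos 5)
Your proof is correct, and it takes a genuinely different route from the paper's. The paper's necessity argument (Section~\ref{sec:necessity} and the appendix) is a long induction over the dimensions of faces, building auxiliary families of equivariant homomorphisms $d^s$ and $f^s$, invoking that the colored subcomplexes $\cS^q$ are spheres so that low-dimensional cycles bound, and finally deriving in Theorem~\ref{theo2} that $a(\partial\sigma^n)$ is homologous to a nonzero multiple of $\partial 0^n$. Its sufficiency argument imports a subdivision construction from~\cite{crpodc08} and then runs a separate induction to verify equivariance of the induced chain map. You bypass all of that: equivariance plus color-preservation pins $a$ down to a ``weight-only'' normal form $a_q(\rho)=\sum_k m^{(q)}_k c_k(\rho)$, the chain-map identity collapses to the Pascal-type recurrence $m^{(q-1)}_l=m^{(q)}_l+m^{(q)}_{l+1}$ with the boundary conditions $m^{(n)}_0=m^{(n)}_{n+1}=0$, and solving it identifies the augmentation degree as $d=\sum_{t=1}^{n}\binom{n+1}{t}m_t$, at which point B\'ezout gives both directions at once. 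I checked the normal-form step carefully (including the orientation signs, which cancel exactly as you indicate for permutations stabilizing $\ids(\rho)$), the boundary formula for $c_k$, the linear independence of the $c_l(face_j(\rho))$, and the downward iteration of the recurrence; all are sound. The payoff of your approach is that it is entirely self-contained and elementary: no homology of spheres, no appeal to an external subdivision construction, and both implications fall out of the same normal form. The only interpretive wrinkle is the meaning of ``non-trivial,'' which the paper never defines precisely; you read it as $a_{-1}=\pm1$, while the paper's Theorem~\ref{theo2} (with its ``$1+\cdots$'') implicitly assumes the augmentation-preserving normalization $a_{-1}=1$ that a simplicial decision map always satisfies. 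Either reading gives the same gcd criterion, so this is cosmetic, but it is worth stating your convention explicitly since the theorem is vacuous if ``non-trivial'' merely means $a\neq 0$.
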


\subsection{Necessity}
\label{sec:necessity}
In this section we prove the ``only if" direction:
if ${n+1 \choose 1}, \hdots, {n+1 \choose n}$ are not relatively prime,
there is no non-trivial color-preserving equivariant chain map 
$a:\cC(\sigma^n) \to \cC(\cA^n)$.
We prove that $a$  must map the boundary $\partial \sigma^n$ to a cycle
of $\cC(\cA^n)$ that is not a boundary,
a contradiction since chain maps preserve cycles and boundaries.

Consider the chain map $z: \cC(\bd(\sigma^n)) \to \cC(\cA^n)$
that maps each simplex $\ang{c_0 \hdots c_i}$ of $\cC(\bd(\sigma^n))$ to
$\ang{(c_0, 0) \ldots (c_i, 0)}$.
This map is color-preserving and equivariant. 
By induction on the dimension of the faces of $\sigma^n$, it can be proved the 
following lemma.

\begin{lemma}
\label{lemma1}
For each subset $s$ of $[n]$ there are families of equivariant homomorphisms
\begin{align*}
d^s_q: \cC^q(\sigma^n)	&\to \cC^{q+1}(\cA^n)	\\
f^s_p: \cC^p(\sigma^n)	&\to \cC^p(\cA^n)
\end{align*}
for $-1 \leq q \leq n-2$ and $0 \leq p \leq n-1$.
Moreover, for any proper $q$-dimensional face $\sigma$ of $\sigma^n$, the chain 
\begin{equation*}
a(\sigma) - z(\sigma) - d^{\ids(\sigma)}(\partial \sigma) -
\sum_{\sigma' \in \sk^{q-2}(\sigma)} f^{\ids(\sigma')}(\sigma)
\end{equation*}
is a $q$-cycle.
\end{lemma}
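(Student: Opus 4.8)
Proof plan: The plan is to argue by induction on $q=\dim\sigma$, from $q=0$ up to $q=n-1$, building the families $\{d^{s}_{q}\}$ and $\{f^{s}_{p}\}$ one dimension at a time (the maps $d^{s}_{-1}$ being set up in the base step). Write $c:=a-z$. Since $z(\sigma^{n})$ is not a simplex of $\cA^{n}$, both $a$ and $z$ are genuine equivariant chain maps on $\cC(\bd\sigma^{n})$, hence so is $c$; in particular $\partial c(\sigma)=c(\partial\sigma)$, and $c(\partial\sigma)$ is always a cycle because $\partial c(\partial\sigma)=c(\partial\partial\sigma)=0$. Observe also that, once $\{d^{s}\}$ and $\{f^{s}\}$ are equivariant families, the chain $\omega(\sigma)$ displayed in the statement automatically satisfies $\pi\,\omega(\sigma)=\omega(\pi\sigma)$ for all $\pi\in\cS_{n}$, since each summand transforms correctly (using $\ids(\pi\sigma)=\pi\,\ids(\sigma)$ and $\sk^{q-2}(\pi\sigma)=\pi\,\sk^{q-2}(\sigma)$). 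Consequently it is enough to verify the cycle condition for one representative $q$-face per dimension, say $\sigma_{q}=\langle 0\,1\cdots q\rangle$, while arranging that the collections are equivariant families, $\pi\circ d^{s}=d^{\pi s}\circ\pi$ and $\pi\circ f^{s}=f^{\pi s}\circ\pi$, and that every chain they produce uses only colours from the pertinent colour set.

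The base cases $q\le 0$ are immediate: the $f$-sum is empty, the case $q=-1$ is vacuous, and for $q=0$ colour-preservation gives $c(\langle i\rangle)=\lambda(P_{i},0)+\mu(P_{i},1)$ with $\lambda+\mu=a_{-1}(1)-1$, so the equivariant choice $d^{\{i\}}_{-1}(1):=(a_{-1}(1)-1)(P_{i},0)$ makes $\omega(\langle i\rangle)$ a $0$-cycle (and if ``non-trivial'' is read as normalising $a_{-1}$ to the identity, one may simply take $d^{\{i\}}_{-1}=0$).

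For the inductive step fix $1\le q\le n-1$, put $\sigma=\sigma_{q}$ and $s=\ids(\sigma)$, and assume the families and the cycle property are known in dimensions below $q$. Expanding $\partial c(\sigma)=c(\partial\sigma)=\sum_{j}(-1)^{j}c(\mathrm{face}_{j}\sigma)$ and substituting, for each $j$, the inductive description of $c(\mathrm{face}_{j}\sigma)$, we get a splitting $c(\partial\sigma)=\zeta(\sigma)+\eta(\sigma)$ in which $\zeta(\sigma):=\sum_{j}(-1)^{j}\omega(\mathrm{face}_{j}\sigma)$ is a $(q-1)$-cycle (a signed sum of cycles) and $\eta(\sigma)$ collects all contributions of the maps built at lower stages. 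A bookkeeping computation -- using $\partial\partial=0$ to collate repeated subfaces, together with the identities $\partial\bigl(d^{\ids(\mathrm{face}_{j}\sigma)}(\partial(\mathrm{face}_{j}\sigma))+\sum f^{\ids(\sigma'')}(\mathrm{face}_{j}\sigma)\bigr)=c(\partial(\mathrm{face}_{j}\sigma))$ supplied by the lower cycle property -- shows that $\eta(\sigma)$ is itself a $(q-1)$-cycle and regroups as $\sum_{\sigma'\in\sk^{q-2}(\sigma)}\eta_{\sigma'}$, where each $\eta_{\sigma'}$ is a cycle assembled equivariantly from the lower maps and has colours in $s$. Now $\zeta(\sigma)$ and all the $\eta_{\sigma'}$ are $(q-1)$-cycles supported on the $s$-coloured subcomplex of $\cA^{n}$, which is the join of $q+1$ copies of $S^{0}$, hence a $q$-sphere (the parity constraint defining $\cA^{n}$ does not bite here, since $|s|=q+1\le n$); as $q-1<q$, each of these cycles bounds in it. Choosing a $q$-chain $d^{s}_{q-1}(\partial\sigma)$ with boundary $\zeta(\sigma)$ and $q$-chains $f^{\ids(\sigma')}_{q}(\sigma)$ with boundary $\eta_{\sigma'}$ then yields $\partial\omega(\sigma)=c(\partial\sigma)-\zeta(\sigma)-\eta(\sigma)=0$, which closes the induction.

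The one genuinely delicate point -- and what I expect to be the main obstacle -- is to make these primitives compatible with the setwise stabiliser of $s$ in $\cS_{n}$, so that transporting them by the group action yields well-defined families in the weak (``family'') sense of equivariance. The cycles to be filled already carry the correct symmetry, being built from the equivariant $c$; the work is to upgrade an arbitrary integral primitive to a stabiliser-symmetric integral one, without passing to rational coefficients. This is precisely where the concrete combinatorics of the annulus is used: the $s$-coloured subcomplex, with its $\cS_{s}$-action permuting the $S^{0}$-factors of the $q$-sphere, is acyclic enough as a $\mathbb{Z}[\mathrm{Stab}(s)]$-complex in the relevant range for such a symmetric primitive to exist, and one must also carry out the explicit regrouping of $\eta(\sigma)$. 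Everything else is formal, using only that $a$ and $z$ are equivariant chain maps and that $\cA^{n}$ is connected with the homology of an $(n-1)$-sphere. Finally, note that the arithmetic hypothesis of Theorem~\ref{theo} is not needed here: the lemma holds for every non-trivial colour-preserving equivariant $a$, and the binomial coefficients enter only afterwards, through the cycle $\sum_{j}(-1)^{j}\omega(\mathrm{face}_{j}\sigma^{n})$ associated with $\partial\sigma^{n}$.
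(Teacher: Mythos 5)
Your plan is essentially the proof given in the paper: proceed by induction on the dimension of the face, observe that the $s$-coloured subcomplex of $\cA^n$ is a $q$-sphere in which every lower-dimensional cycle bounds, fill the $(q-1)$-cycles $\omega(\mathrm{face}_j\sigma)$ to define $d^s$ and fill the cycle pieces $\eta_{\sigma'}$ of what you call $\eta(\sigma)$ (the paper's $\gamma+\lambda$) to define $f^{\ids(\sigma')}$, and then propagate the choices to all other faces by the $\cS_n$-action to make the families equivariant. What you leave as ``bookkeeping'' --- the regrouping of $\gamma+\lambda$ into the $(q-1)$-cycles $\alpha_{ij}$ and $\alpha_{\sigma'}$, split by $\dim\sigma'=q-3$ versus $\dim\sigma'\le q-4$ --- is exactly the content the paper works out in detail, and the stabiliser/well-definedness issue you single out as the delicate point is likewise the point the paper addresses only by asserting a ``symmetric'' extension via the permutations $\pi^m_i$; so your sketch matches the paper's argument both in structure and in where the real work lies.
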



Let $\partial 0^n$ be the $(n-1)$-cycle of $\cC(\cA^n)$ defined as 
$\sum^n_{i=0} (-1)^i \ang{(P_0,0) \ldots \widehat{(P_i,0)} \ldots (P_n,0)}$, 
where circumflex ($\, \, \widehat{ } \, \,$) denotes omission.
Notice that $z(\partial \sigma^n) = \partial 0^n$ and $\partial 0^n$ is not a boundary.
Using Lemma \ref{lemma1} we can prove Theorem \ref{theo2}.
Informally, this theorem says that if the coefficients are not relatively prime,
any such map is forced to wrap non-zero ``times" $\partial \sigma^n$,
the boundary of a ``solid region'' $\sigma^n$,
around $0^n$, the boundary of a ``hole'' in $\cA^n$.
Because the map in question is a chain map sending boundaries to boundaries, 
it cannot exist.

\begin{theorem}
\label{theo2}
Let $a: \cC(\sigma^n) \to \cC(\cA^n)$ be a non-trivial color-preserving equivariant chain map.
For some set of integers $k_0, \ldots, k_{n-1}$,
\begin{equation*}
a(\partial \sigma^n) \thicksim
\left( 1 + \sum^{n-1}_{q=0} k_q {n+1 \choose q+1} \right) \partial 0^n.
\end{equation*}

\end{theorem}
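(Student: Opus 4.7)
The plan is to apply Lemma \ref{lemma1} to each of the $n+1$ codimension-one faces of $\sigma^n$, take the alternating sum, and identify the resulting chain modulo boundaries of $\cC(\cA^n)$. Write $\partial \sigma^n = \sum_{i=0}^{n} (-1)^i \sigma_i$, where $\sigma_i$ is the $(n-1)$-face with $\ids(\sigma_i) = [n] \setminus \{i\}$. Applying Lemma \ref{lemma1} to each $\sigma_i$ and combining with signs $(-1)^i$ yields
\begin{equation*}
a(\partial \sigma^n) \; = \; z(\partial \sigma^n) \; + \; D \; + \; F \; + \; C,
\end{equation*}
where $D$, $F$, and $C$ collect the alternating sums of the $d$-, $f$-, and cycle terms respectively. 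From the definition of $z$ I read off directly that $z(\partial \sigma^n) = \partial 0^n$, which supplies the leading ``$1$'' in the theorem's formula. Since $a$ is a chain map, $a(\partial \sigma^n) = \partial a(\sigma^n)$ is itself a boundary (hence a cycle), and $\partial 0^n$ is a cycle, so $D + F + C$ is also a cycle of $\cC_{n-1}(\cA^n)$.

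Next I would regroup $D + F$ by the face of $\sigma^n$ being acted upon. Swapping the order of summation, each $q$-face $\sigma'$ of $\sigma^n$ (for $0 \leq q \leq n-2$) contributes a chain in which either $d^{\ids(\sigma')}$ or $f^{\ids(\sigma')}$ is applied to a signed combination of the $\sigma_i$ with $i \notin \ids(\sigma')$. Since $\cA^n$ is an annulus with its single $n$-hole at $0^n$, one has $H_{n-1}(\cC(\cA^n)) \cong \mathbb{Z}$, generated by $[\partial 0^n]$, and the action of $\cS_n$ on this group is by the sign character of permutations. Hence every $(n-1)$-cycle of $\cC(\cA^n)$ is homologous to an integer multiple of $\partial 0^n$. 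Using $\cS_n$-equivariance of $a$, of $\{d^s\}$, and of $\{f^s\}$, together with the transitive action of $\cS_n$ on $q$-faces of $\sigma^n$, the contributions of the ${n+1 \choose q+1}$ faces of dimension $q$ are $\cS_n$-translates of a single contribution, which in homology becomes an integer $k_q$ times $[\partial 0^n]$. Summed, they yield $k_q {n+1 \choose q+1} \partial 0^n$ modulo boundaries. The cycle $C$, equivariantly determined by the top-dimensional data, is absorbed into the $q = n-1$ term. Combining all contributions produces $a(\partial \sigma^n) \sim \left(1 + \sum_{q=0}^{n-1} k_q {n+1 \choose q+1}\right) \partial 0^n$ as claimed.

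The main obstacle will be the sign bookkeeping: the $(-1)^i$ factors from the boundary operator, the sign character of $\cS_n$ acting on $[\partial 0^n]$, and the orientation conventions used in the definition of $d^s$ and $f^s$ must all cooperate so that the orbit of $q$-faces contributes with the full multiplicity ${n+1 \choose q+1}$ rather than a stabilizer-reduced one. A secondary care point is verifying that, when collecting the contributions at a single dimension $q$, the result is genuinely a cycle (so that its homology class is well defined) rather than only ``a cycle after summing over all $q$''; this again reduces to an equivariance argument together with the fact that $D + F + C$ is already known to be a cycle.
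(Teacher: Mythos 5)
Your proposal is correct and follows essentially the same route as the paper: apply Lemma~\ref{lemma1} to each $(n-1)$-face, take the alternating sum so that $z(\partial\sigma^n)=\partial 0^n$ supplies the leading $1$, regroup the $d$- and $f$-terms by lower-dimensional faces into individual $(n-1)$-cycles, and use that $H_{n-1}(\cA^n)\cong\mathbb{Z}$ with $\cS_n$ acting by sign (the paper packages this as Lemmas~\ref{lemmahsjacm2} and~\ref{lemmahsjacm3}) together with equivariance to get a single coefficient $k_q$ for each orbit of $q$-faces. The two ``care points'' you flag --- the sign bookkeeping under adjacent transpositions and the fact that each face's contribution is individually a cycle --- are exactly the details the paper's appendix resolves via the permutations $\pi^{j+1}_j$ and the induction hypotheses of Lemma~\ref{lemma1}.
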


\begin{proof}
(Sketch)
Let $\sigma_i$ denote the $(n-1)$-dimensional face $\ang{0 \hdots \widehat i \hdots n}$ of $\sigma^n$.
By Lemma \ref{lemma1},
\begin{equation*}
\alpha_i = a(\sigma_i) - z(\sigma_i) - d^{\ids(\sigma_i)}(\partial \sigma_i) -
\sum_{\sigma' \in \sk^{n-3}(\sigma_i)} f^{\ids(\sigma')}(\sigma_i)
\end{equation*}
is an $(n-1)$-cycle. 
Because $a$, $z$, $d$ and $f$ are equivariant,
for every $i \in [n]$,
$\alpha \thicksim (-1)^i k_{n-1} \partial 0^n$
for some integer $k_{n-1}$.
Therefore, $\sum^q_{i=0} (-1)^i \alpha_i \thicksim k_{n-1}(n+1)\partial 0^n$, hence 
\begin{equation}
\label{eqsketch1}
a(\partial \sigma^n) \thicksim (1 + k_{n-1} (n+1)) \partial 0^n + \gamma + \lambda
\end{equation}
where
\begin{equation*}
\gamma = \sum^{n}_{i=0} (-1)^i d^{\ids(\sigma_i)}(\partial \sigma_i) \hspace{1.5cm} \hbox{and} \hspace{1.5cm}
\lambda = \sum^{n}_{i=0} (-1)^i \sum_{\sigma' \in \sk^{n-3}(\sigma_i)}
f^{\ids(\sigma')}(\sigma_i),
\end{equation*}
since $z(\partial \sigma^n) = \partial 0^n$.

It is not hard to check that $\gamma = \sum^n_{i=0} \sum^n_{j+1} \alpha_{ij}$, where 
$\alpha_{ij} = (-1)^{i+j} ( d^{\ids(\sigma_j)}(\sigma_{i j}) - d^{\ids(\sigma_i)}(\sigma_{i j}) )$
and $\sigma_{i j}$ is the $(n-2)$-dimensional face
$\ang{0 \hdots \widehat i \hdots \widehat j \hdots n}$ of $\sigma^n$. 
It can be proved that $\alpha_{ij}$ is an $(n-1)$-cycle.
Moreover, using the fact that $a$, $z$, $d$ and $f$ are equivariant, 
we can prove that $\alpha_{ij} \thicksim k_{n-2} \partial 0^n$ for some integer $k_{n-2}$, for every $0 \leq i < j \leq n$.
Therefore, 
\begin{equation}
\label{eqsketch2}
\gamma \thicksim {n+1 \choose n-1} k_{n-2} \partial 0^n
\end{equation}

We can prove that $\lambda = \sum_{\sigma \in \sk^{n-3}(\sigma^n)} \alpha_\sigma$,
where $\alpha_\sigma = \sum_{i \in [n] - \ids(\sigma)} (-1)^i f^{\ids(\sigma)}(\sigma_i)$. 
Moreover, each $\alpha_\sigma$ is an $(n-1)$-cycle.
As for $\gamma$, it can be proved that for each $\sigma \in \sk^{n-3}(\sigma^n)$ of 
dimension $q$, $\alpha_\sigma \thicksim k_q \partial 0^n$, for some integer $k_q$. 
Thus, 
\begin{equation}
\label{eqsketch3}
\lambda \thicksim \sum^{n-3}_{i=0} {n+1 \choose i+1} k_q \partial 0^n
\end{equation}

The theorem follows from Equations (\ref{eqsketch1}), (\ref{eqsketch2}) and (\ref{eqsketch3}).
\end{proof}

Theorem \ref{theo2} says that 
$a(\partial \sigma^n) \thicksim ( 1 + \sum^{n-1}_{q=0} k_q {n+1 \choose q+1} ) \partial 0^n$.
It follows from elementary Number Theory that 
if ${n+1 \choose 1}, \hdots$, ${n+1 \choose n}$ are not relatively prime,
this equation has no integer solutions,
implying that $a(\partial \sigma^n)$ is not a boundary, hence $a$ cannot exist.

\begin{lemma}
\label{lemma2}
If the binomial coefficients ${n+1 \choose 1}, \hdots, {n+1 \choose n}$ are not relatively prime then 
there is no non-trivial color-preserving equivariant chain map $a: \cC(\sigma^n) \to \cC(\cA^n)$.
\end{lemma}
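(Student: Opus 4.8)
The plan is to deduce Lemma~\ref{lemma2} directly from Theorem~\ref{theo2}, the only extra ingredients being the (reduced) homology of the annulus $\cA^n$ and an elementary divisibility argument; essentially all of the work has already been done in Lemma~\ref{lemma1} and Theorem~\ref{theo2}.

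Arguing by contradiction, suppose a non-trivial color-preserving equivariant chain map $a\colon\cC(\sigma^n)\to\cC(\cA^n)$ exists. Since $\partial\sigma^n=\partial_n\sigma^n$ is a boundary of $\cC(\sigma^n)$ and $a$ is a chain map, $a(\partial\sigma^n)$ is the boundary of the $n$-chain $a_n(\sigma^n)$, so its class in $H_{n-1}(\cA^n)$ vanishes. On the other hand, Theorem~\ref{theo2} produces integers $k_0,\dots,k_{n-1}$ with
\begin{equation*}
a(\partial\sigma^n)\thicksim\Bigl(1+\sum_{q=0}^{n-1}k_q\binom{n+1}{q+1}\Bigr)\partial 0^n ,
\end{equation*}
so, writing $c=1+\sum_{q=0}^{n-1}k_q\binom{n+1}{q+1}$, the class of $a(\partial\sigma^n)$ in $H_{n-1}(\cA^n)$ equals $c\cdot[\partial 0^n]$.

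Next I would pin down the relevant homology. The complex on the vertices $(P_i,b)$ whose simplices are the color-distinct sets is the $(n+1)$-fold join $(S^0)^{*(n+1)}$, an $n$-sphere; the annulus $\cA^n$ is obtained from it by deleting the interiors of the two \emph{disjoint} $n$-faces $0^n=\{(P_i,0)\}$ and $1^n=\{(P_i,1)\}$. Removing the interior of $0^n$ turns the $n$-sphere into an $n$-disk with boundary the subcomplex $\bd(0^n)$; removing the interior of $1^n$, which lies in that open disk, leaves a space that deformation retracts radially onto $\bd(0^n)\cong S^{n-1}$. Hence $H_{n-1}(\cA^n)\cong\mathbb Z$, it is torsion-free, and it is generated by $[\partial 0^n]$ --- the precise form of the remark ``$\partial 0^n$ is not a boundary''. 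Consequently $c\cdot[\partial 0^n]=0$ forces $c=0$, i.e.\ $\sum_{q=0}^{n-1}k_q\binom{n+1}{q+1}=-1$.

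To finish, let $d=\gcd\bigl(\binom{n+1}{1},\dots,\binom{n+1}{n}\bigr)$ and assume $d>1$. Then $d$ divides $\sum_{q=0}^{n-1}k_q\binom{n+1}{q+1}$ for every choice of integers $k_q$, so this sum can never equal $-1$; this contradicts the previous paragraph, hence no such $a$ exists. I expect the main (though standard) obstacle to be the homology computation: one needs not merely that $\partial 0^n$ is a nonzero class but that no nonzero multiple of it bounds, which is exactly what the deformation retraction above (or, alternatively, a Mayer--Vietoris or excision comparison of $\cA^n$ with $S^n$ and its two removed cells) establishes. One should also note the degenerate bookkeeping: for $n=1$ the group in question is the reduced homology $\tilde H_0(\cA^n)$, which is precisely what the paper's augmented chain complex computes, so the same argument still applies.
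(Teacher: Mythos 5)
Your proposal is correct and follows the same route as the paper: combine Theorem~\ref{theo2} with the fact that $a(\partial\sigma^n)$ must be a boundary, identify $H_{n-1}(\cA^n)\cong\mathbb{Z}$ generated by $[\partial 0^n]$, and then use the divisibility obstruction. The paper compresses this into one sentence ("it follows from elementary Number Theory\dots"), relying on Lemma~\ref{lemmahsjacm2} for the homology claim, whereas you spell out both the deformation-retraction computation of $H_{n-1}(\cA^n)$ and the gcd argument explicitly; this is the same proof with the implicit steps made visible.
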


\subsection{Sufficiency}
\label{sec:sufficiency}
In this section we prove the ``if" direction:
if ${n+1 \choose 1}, \hdots, {n+1 \choose n}$ are relatively prime,
then there is a non-trivial color-preserving equivariant chain map 
$a:\cC(\sigma^n) \rightarrow \cC(\cA^n)$.

Earlier work~\cite{crpodc08} presents a construction that takes a simplex
$\sigma^n$ and a set of integers $\{ k_0, \hdots, k_{n-1} \}$ with  $k_0 \in \{ 0, -1 \}$,
and produces a subdivision $\chi(\sigma^n)$ with the following two colorings.
First, $\ids$ is a proper coloring with respect to $[n]$.
Second, $b$ is a binary coloring which induces
$1 + \sum^{n-1}_{i=0} k_i {n+1 \choose i+1}$ monochromatic $n$-simplexes.
The binary coloring $b$ is symmetric in a sense that for each pair of
$m$-faces $\sigma_i$ and $\sigma_j$ of $\sigma^n$,
there is a simplicial bijection $\mu_{ij}: \chi(\sigma_i) \rightarrow \chi(\sigma_j)$
such that for every vertex $v \in \chi(\sigma_i)$, 
$b(v) = b(\mu(v))$ and $\rk(\ids(v)) = \rk(\ids(\mu(v)))$,
where $\rk: \ids(\sigma_i) \to \ids(\sigma_j)$ is the rank function
such that if $a < b$ in $\ids(\sigma_i)$, then $\rk(a) < \rk(b)$.

By a standard construction, subdivisions induce chain maps. 
In particular, $\chi(\sigma^n)$ induces a chain map
$\mu_1: \cC(\sigma^n) \to \cC(\chi(\sigma^n))$.
The colorings $id$ and $b$ define a simplicial map
$\chi(\sigma^n) \to \cA^n$ only if $b$ defines no monochromatic
$n$-simplexes in $\chi(\sigma^n)$.
Specifically, if $1 + \sum^{n-1}_{i=0} k_i {n+1 \choose i+1} = 0$.
It follows from elementary Number Theory that
that if ${n+1 \choose 1}, \hdots$, ${n+1 \choose n}$ are relatively prime,
then the equation ${n+1 \choose 1} k_0 + {n+1 \choose 2} k_2 + \hdots + {n+1 \choose n} k_{n-1} = 1$
has an integer solution, thus the simplicial map induced by $id$ and $b$ induces a chain map
$\mu_2: \cC(\chi(\sigma^n)) \to \cC(\cA^n)$.

Let $a$ be the composition $\mu_2 \circ \mu_1$.
Since $\chi(\sigma^n)$ is a chromatic subdivision of $\sigma^n$,
$a$ is clearly non-trivial and color-preserving. 
To show that $a$ is equivariant,
one can prove by induction on $q$ that the restriction
$a \vert_{\cC(\sk^q(\sigma^n))}$, $0 \leq q \leq n$, is equivariant.
By symmetry of $b$, the base case $q=0$ is trivial.
For the induction hypothesis, assume $a \vert_{\cC(\sk^{q-1}(\sigma^n))}$ is equivariant.
The induction step consists in proving that, for each $q$-face
$\sigma = \and{c_0 \hdots c_q}$ of $\sigma^n$,
$a(\partial \sigma)$ forces the value $a(\sigma)$ such that $\pi \circ a(\sigma) = a \circ \pi(\sigma)$
for every $\pi \in \cS_n$, hence $a \vert_{\cC(\sk^q(\sigma^n))}$ is equivariant.
Roughly speaking, the proof first observes that 
$a(\sigma) = \sum_{\tau \in L_q} k_{\tau} \tau$,
where $L_q = \{ \tau \vert \tau \in \cA^n \hbox{ and } \ids(\tau) = \ids(\sigma) \}$ 
and integer $k_{\tau}$, since $a$ is color preserving.
The induction hypothesis that $a \vert_{\cC(\sk^{q-1}(\sigma^n))}$ is equivariant
implies that $a(\partial \sigma)$ forces the value $a(\sigma)$
such that $k_\tau = k_{\tau'}$ for 
$\tau, \tau' \in L_{q, k} = \{ \tau \vert \tau \in L_q \hbox{ and } \vert \set{ v \in \tau \vert b(v) = 1 } \vert = k \}$,
$0 \leq k \leq q+1$.
For example, for $\ang{012}$ and $k=2$, this says that if 
$\ang{(0,0) (1,1) (2,1)}$ appears in $a(\ang{012})$ with coefficient $\ell$,
then $\ang{(0,1) (1,1) (2,0)}$ and $\ang{(0,1) (1,0) (2,1)}$ appear in $a(\ang{012})$
with coefficient $\ell$ too. It is not hard to see that this proves 
$a \circ \pi(\sigma) = \pi \circ a(\sigma)$ for every $\pi \in \cS_n$, 
hence the inductive step is done. 

\begin{lemma}
\label{lemma3}
If the binomial coefficients ${n+1 \choose 1}, \hdots, {n+1 \choose n}$ are
relatively prime then there is a non-trivial color-preserving equivariant chain map $a:
\cC(\sigma^n) \to \cC(\cA^n)$.
\end{lemma}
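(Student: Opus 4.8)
The plan is to realize the desired chain map as a composition of two chain maps induced by combinatorial objects, and then verify equivariance by induction on skeleta. First I would invoke the construction of Casta\~neda and Rajsbaum~\cite{crpodc08}: given the standard simplex $\sigma^n$ and a tuple of integers $\{k_0,\ldots,k_{n-1}\}$ with $k_0 \in \{0,-1\}$, it produces a chromatic subdivision $\chi(\sigma^n)$ carrying a proper coloring $\ids$ by $[n]$ and a binary coloring $b$ whose number of monochromatic $n$-simplexes is exactly $1 + \sum_{i=0}^{n-1} k_i \binom{n+1}{i+1}$, and moreover $b$ is symmetric in the strong sense recalled above (for each pair of $m$-faces there is a rank- and $b$-preserving simplicial bijection between their subdivisions). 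By elementary number theory, relative primality of the $\binom{n+1}{i+1}$ means the linear Diophantine equation $\sum_{i=0}^{n-1} k_i \binom{n+1}{i+1} = -1$ has an integer solution; fixing such a solution forces the monochromatic count to be $0$, so that $\ids$ and $b$ jointly define a genuine simplicial map $\chi(\sigma^n) \to \cA^n$.

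Next I would assemble $a := \mu_2 \circ \mu_1$, where $\mu_1 : \cC(\sigma^n) \to \cC(\chi(\sigma^n))$ is the chain map induced by the subdivision and $\mu_2 : \cC(\chi(\sigma^n)) \to \cC(\cA^n)$ is the chain map induced by the simplicial map above. Both are standard constructions, so $a$ is a chain map; since $\chi$ is a chromatic subdivision, $a$ carries $\sigma^n$ to a nonzero chain properly colored with $\ids(\sigma^n)$, hence $a$ is non-trivial and color-preserving. The only substantive point left is \emph{equivariance}: because the $\cS_n$-action permutes colors, $a$ equivariant would say $\pi \circ a(\sigma) = a \circ \pi(\sigma)$ for every $\pi \in \cS_n$ and every face $\sigma$ of $\sigma^n$.

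I would prove equivariance by induction on the dimension $q$ of the skeleton $\sk^q(\sigma^n)$. The base case $q=0$ is immediate from the $b$-symmetry of the construction (all vertices of a given color behave alike). For the inductive step, fix a $q$-face $\sigma = \ang{c_0 \cdots c_q}$. Color-preservation gives $a(\sigma) = \sum_{\tau \in L_q} k_\tau\, \tau$ where $L_q = \{\tau \in \cA^n : \ids(\tau) = \ids(\sigma)\}$. The chain-map relation $\partial\, a(\sigma) = a(\partial\sigma)$, together with the induction hypothesis that $a$ is already equivariant on $\sk^{q-1}(\sigma^n)$, pins down the coefficients $k_\tau$ up to the freedom of which $\tau \in L_q$ share the same $b$-weight $k := |\{v \in \tau : b(v)=1\}|$; one shows $k_\tau = k_{\tau'}$ whenever $\tau,\tau'$ lie in the same weight class $L_{q,k}$. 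But permutations $\pi \in \cS_n$ fixing $\ids(\sigma)$ setwise act on $L_q$ precisely by permuting within weight classes, and permutations moving $\ids(\sigma)$ are handled by the rank-preserving bijections $\mu_{ij}$ of the symmetric construction; hence $\pi \circ a(\sigma) = a \circ \pi(\sigma)$, completing the step. I expect the main obstacle to be exactly this inductive step: one must argue carefully that the boundary constraint $a(\partial\sigma)$ leaves no freedom beyond equating coefficients within each $b$-weight class, so that the symmetry already built into $b$ propagates to all of $a$. Granting Theorem~\ref{theo} is the conjunction of Lemma~\ref{lemma2} and Lemma~\ref{lemma3}, this establishes the ``if'' direction.
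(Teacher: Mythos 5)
Your proposal matches the paper's proof essentially step for step: invoke the Casta\~neda--Rajsbaum subdivision with a $\{k_i\}$ solving the Diophantine equation so that $b$ has no monochromatic $n$-simplexes, define $a = \mu_2 \circ \mu_1$, and establish equivariance by induction on skeleton dimension via the claim that the coefficients $k_\tau$ in $a(\sigma) = \sum_{\tau\in L_q} k_\tau \tau$ agree on each $b$-weight class $L_{q,k}$. The paper fills in the inductive step you flag as the main obstacle by a secondary induction on $k$ (its Proposition~\ref{prop2}), but the overall route is the same.
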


\section{Applications to Distributed Computing}
Theorem~\ref{theo} is a statement about the existence of equivariant chain maps
between two simple topological spaces.
In this section, we explain what this theorem says about distributed computing.

Informally,
a complex is \emph{$k$-connected} if any continuous map from the boundary of a
$k$-simplex to the complex can be extended to a continuous map of the entire simplex.
It is known that if a protocol complex $k$-connected,
then it cannot solve $(k+1)$-set agreement~\cite{hrmscs00,hsjacm99}.
In the \emph{$(k+1)$-set agreement} the processes start with a private input 
value and each chooses an output value among input values;
at most $k+1$ distinct output values are elected.

Here is how to apply this theorem to tell if there is no wait-free protocol
for $2n$-renaming for $(n+1)$ processes in wait-free read-write memory.
This description is only a summary:
the complete construction appears elsewhere~\cite{hrmscs00}.
Recall that WSB and $2n$-renaming are equivalent in 
an asynchronous system made of $n+1$ processes that 
communicate using a read/write shared memory~\cite{grhdisc06}.

The WSB task is given by $(\sigma^n, \cA^n, \Delta)$,
where $\sigma^n$ is a properly colored simplex that represents the
unique input configuration, $\cA^n$ is the annulus corresponding to all
possible output binary values, 
and $\Delta(\sigma^n)$ defines all legal assignments.
Assume we have a wait-free protocol $\cP$ that solves WSB,
and let $\cP(\sigma^n)$ be the complex generated by all executions of the
protocol starting from $\sigma^n$.
Any such protocol complex is $n$-connected~\cite{hsjacm99}.

The anonymity requirement for WSB induces a symmetry on the 
binary output values of the boundary of $\cP(\sigma^n)$.
This symmetry allows to construct a an equivariant simplicial map
$\phi: \cP(\sigma^n) \to \cP(\cA^n)$.
Prepending the map $\cC(\sigma^n) \to \cC(\cP(\sigma^n))$ induced by a subdivision,
this equivariant simplicial map induces equivariant chain maps:
\begin{equation*}
\begin{CD}
\cC(\sigma^n) @>>>
\cC(\cP(\sigma^n)) @>>>
\cC(\cA^n).
\end{CD}
\end{equation*}
The composition of these maps yields an equivariant chain
map $a: \cC(\sigma^n) \to \cC(\cA^n)$.
Theorem~\ref{theo}, however, states that this chain map does not exist if the
binomial coefficients are not relatively prime.
\begin{corollary}
if  ${n+1\choose 1}, \hdots, {n+1 \choose n}$ are not relatively prime,
there is no wait-free $2n$-renaming protocol in the asynchronous read/write memory
or message-passing models.
\end{corollary}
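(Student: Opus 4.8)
The plan is to argue by contradiction: from a putative wait-free $2n$-renaming protocol I would extract exactly the kind of chain map that Theorem~\ref{theo} (equivalently Lemma~\ref{lemma2}) rules out.

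First I would reduce to weak symmetry-breaking. By the equivalence of $2n$-renaming and WSB in the asynchronous read/write model~\cite{grhdisc06}, a wait-free $2n$-renaming protocol gives a wait-free WSB protocol $\cP$ for the task $(\sigma^n,\cA^n,\triangle)$, where $\triangle(\sigma^m)$ is the set of all $m$-simplexes of $\cA^n$. Let $\cP(\sigma^n)$ be the protocol complex of executions starting from $\sigma^n$, with $\delta\colon\cP(\sigma^n)\to\cA^n$ a color-preserving decision map. The fact I would import from prior work~\cite{hsjacm99} is that, because the protocol is wait-free in an asynchronous model, $\cP(\sigma^n)$ is $n$-connected (so in particular its boundary subcomplex $\bd\cP(\sigma^n)$, consisting of executions in which at least one process takes no step, is $(n-1)$-connected).

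Second I would make the symmetry explicit and assemble the chain map. Anonymity forces every process to run the same program, so a permutation $\pi\in\cS_n$ of process IDs carries executions to executions, relabeling IDs and views accordingly; thus $\cP(\sigma^n)$ is an $\cS_n$-complex, $\bd\cP(\sigma^n)$ is an $\cS_n$-subcomplex, and the restriction of $\delta$ to $\bd\cP(\sigma^n)$ is equivariant, since the binary decisions are simply carried along. Using the $n$-connectivity of $\cP(\sigma^n)$ I would extend this equivariant boundary map, skeleton by skeleton, to an equivariant simplicial map $\phi\colon\cP(\sigma^n)\to\cA^n$ --- or, what already suffices, to an equivariant chain map. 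Precomposing with the equivariant chain map $\mu\colon\cC(\sigma^n)\to\cC(\cP(\sigma^n))$ induced by the chromatic subdivision that refines $\sigma^n$ to $\cP(\sigma^n)$, I obtain
\begin{equation*}
a = \phi\circ\mu\colon \cC(\sigma^n)\longrightarrow\cC(\cA^n),
\end{equation*}
which is equivariant, color-preserving (because $\delta$ and the subdivision are), and non-trivial: both factors preserve augmentations, so $a_{-1}=\id$ and $a$ is not the zero map. Now Theorem~\ref{theo} applies: if ${n+1\choose 1},\ldots,{n+1\choose n}$ are not relatively prime, no such $a$ exists --- a contradiction. Hence there is no wait-free WSB protocol, and therefore no wait-free $2n$-renaming protocol, in the read/write model. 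For the message-passing model I would invoke the standard emulation of a wait-free read/write memory by asynchronous message passing (valid when a majority of processes is correct), so a message-passing solution would yield a read/write one and the same impossibility applies; alternatively the message-passing protocol complex is itself $n$-connected and the argument above runs verbatim with $\cP$ that complex (see~\cite{hrmscs00}).

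The step I expect to be the main obstacle is the equivariant extension producing $\phi$ (equivalently, the equivariant chain map into $\cC(\cA^n)$): anonymity only hands over equivariance on the symmetric boundary $\bd\cP(\sigma^n)$, and one must transport it across the whole protocol complex while keeping it \emph{honestly} $\cS_n$-equivariant, not just equivariant up to homotopy. This is exactly where the $n$-connectivity of $\cP(\sigma^n)$ is spent, and the bookkeeping parallels the inductive, skeleton-by-skeleton construction in the proof of Lemma~\ref{lemma3}, with connectivity of the protocol complex playing the role there played by the explicit symmetric subdivision.
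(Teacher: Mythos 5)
Your proof takes essentially the same route as the paper: reduce $2n$-renaming to weak symmetry-breaking via~\cite{grhdisc06}, observe that the wait-free protocol complex $\cP(\sigma^n)$ is $n$-connected, use anonymity to obtain equivariance, assemble the composite chain map $\cC(\sigma^n)\to\cC(\cP(\sigma^n))\to\cC(\cA^n)$, and then invoke Theorem~\ref{theo} for the contradiction. You also correctly identify that the delicate step --- and the one the paper glosses over as well --- is producing a genuinely $\cS_n$-equivariant map from the protocol complex into $\cA^n$, where connectivity must be spent skeleton by skeleton rather than merely up to homotopy.

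One small caveat: your first suggestion for message passing, emulating read/write memory over messages, requires a correct majority and so is unavailable in the wait-free setting ($n$ of $n+1$ may fail). Your fallback --- that the asynchronous message-passing protocol complex is itself sufficiently connected, so the argument runs verbatim --- is the correct justification and matches what the paper implicitly relies on via~\cite{hrmscs00}.
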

There is a protocol if the coefficients are  relatively prime~\cite{crpodc08},
but that claim is not implied by this corollary.

\begin{figure}
\centerline{\input{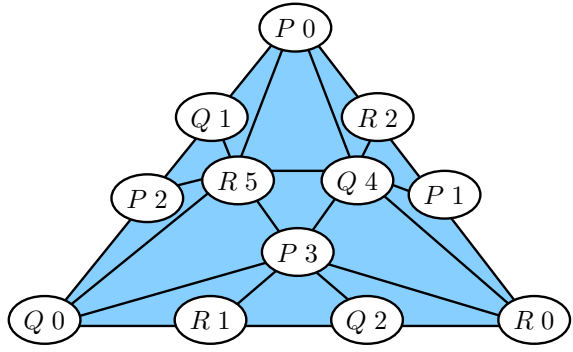}}
\caption{Symmetric input subcomplex for renaming}
\label{figure:symmetric}
\end{figure}

In the more general case, where $t$ out of $n+1$ processes can fail,
the construction is a bit more complicated and 
the dimensions shrink~\cite{exBG}.
The $2n$-renaming task is given by $(\cI,\cO,\Delta)$,
where $\cI$ is the complex defining all possible input name assignments,
$\cO$ is all possible assignments of output names taken from $0, \ldots, 2n-1$,
and for each $\sigma^n \in \cI$,
$\Delta(\sigma^n)$ defines all legal name assignments.

Assume we have a $t$-resilient $(n+t)$-renaming protocol.
Partition the set of processes into two sets,
$n-t$ \emph{passive} processes, and $t+1$ \emph{active} processes.
If $\cC$ is a complex labeled with process IDs,
let $\cC_a$ be the subcomplex labeled with IDs of active processes.
Let $\cP^{*}$ be the protocol complex for executions in which none of the
passive processes fail,
so all failures are distributed among the active processes.
As illustrated in Figure~\ref{figure:symmetric},
we can identify a subcomplex of $\cI$ isomorphic to a subdivision
$\chi(\sigma^n)$ of an $n$-simplex $\sigma^n$,
where the input names are symmetric along the boundary.
Because $\cP_a^{*}(\chi(\sigma^n))$ is $t$-connected~\cite{HerlihyR10}
and by the anonymity requirement for renaming,
we can construct a simplicial map $\phi: \chi^N(\sigma^t) \to \cP_a^{*}(\chi(\sigma^n))$
from a subdivision of a $t$-simplex $\sigma^t$ to the subcomplex of the restricted
protocol complex labeled with active IDs.
The simplicial map $\phi$ is equivariant under $\cS_{t+1}$,
the symmetry group acting on the active process IDs,
as is the simplicial decision map $\delta: \cP^* \to \cO$.
It follows that every passive process takes the same output name in every
execution of $\cP^*$.
Without loss of generality, assume these passive names are $2t, \ldots, n+t-1$,
leaving the range $0, \ldots, 2t-1$ to the active processes.
Let $\pi: \cO_a \to \cA^t$ send each remaining name to its parity.

These equivariant simplicial maps form a sequence:
\begin{equation*}
\begin{CD}
\chi^N(\sigma^t) @>\phi>>
\cP_a^*(\chi(\sigma^n)) @>\delta>>
\cO_a @>\pi>>
\cA^t,
\end{CD}
\end{equation*}
which induces the following sequence of chain maps:
\begin{equation*}
\begin{CD}
\cC(\sigma^t) @>>>
\cC(\cP^*_a(\chi(\sigma^n))) @>>>
\cC(\cO_a) @>>>
\cC(\cA^t).
\end{CD}
\end{equation*}
The composition of these maps yields an equivariant chain
map $a: \cC(\sigma^t) \to \cC(\cA^t)$.
Theorem~\ref{theo}, however, states that this map does not exist if the
binomial coefficients are not relatively prime.
\begin{corollary}
if  ${t+1\choose 1}, \hdots, {t+1 \choose t}$ are not relatively prime,
there is no $t$-resilient $(n+t)$-renaming protocol in the asynchronous read-write
memory or message-passing models.
\end{corollary}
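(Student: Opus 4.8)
The plan is to argue by contradiction: from a hypothetical $t$-resilient $(n+t)$-renaming protocol I would extract a non-trivial color-preserving equivariant chain map $a : \cC(\sigma^t) \to \cC(\cA^t)$, which Lemma~\ref{lemma2} (the necessity half of Theorem~\ref{theo}, applied in dimension $t$) forbids precisely when ${t+1 \choose 1}, \ldots, {t+1 \choose t}$ are not relatively prime. So suppose such a protocol $\cP$ exists. First I would fix a partition of the $n+1$ process IDs into $t+1$ \emph{active} IDs and $n-t$ \emph{passive} IDs, and pass to the subcomplex $\cP^{*}$ generated by those executions in which every passive process is correct, so that all $t$ tolerated crashes fall among the active processes. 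Inside the input complex $\cI$ I would isolate the symmetric subcomplex isomorphic to a chromatic subdivision $\chi(\sigma^n)$ of an $n$-simplex (Figure~\ref{figure:symmetric}), on which the assignment of input names is invariant, up to relabeling, under the group $\cG$ of permutations of the active IDs.

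The first substantive step is to build a $\cG$-equivariant simplicial map $\phi : \chi^N(\sigma^t) \to \cP_a^{*}(\chi(\sigma^n))$ from a subdivision of a $t$-simplex into the part $\cP_a^{*}$ of the restricted protocol complex that carries only active IDs. I would do this skeleton by skeleton: because $\cP_a^{*}(\chi(\sigma^n))$ is $t$-connected by~\cite{HerlihyR10}, the obstruction to extending a partial simplicial map over the next skeleton vanishes, and the anonymity of $\cP$ lets each extension be chosen uniformly over the $\cG$-orbits of simplexes, so that $\phi$ comes out equivariant. Composing $\phi$ with the decision map $\delta : \cP^{*} \to \cO$, itself $\cG$-equivariant by anonymity, forces every passive process to choose the same output name in every execution of $\cP^{*}$; after relabeling we may take the passive names to be $2t, \ldots, n+t-1$, which confines the active processes to the names $0, \ldots, 2t-1$. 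Letting $\pi : \cO_a \to \cA^t$ send each surviving name to its parity, this yields a sequence of $\cG$-equivariant simplicial maps
\begin{equation*}
\begin{CD}
\chi^N(\sigma^t) @>\phi>> \cP_a^{*}(\chi(\sigma^n)) @>\delta>> \cO_a @>\pi>> \cA^t .
\end{CD}
\end{equation*}

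Prepending the chain map $\cC(\sigma^t) \to \cC(\chi^N(\sigma^t))$ induced by the subdivision and passing to chain complexes, I would obtain a sequence of $\cG$-equivariant chain maps whose composite $a : \cC(\sigma^t) \to \cC(\cA^t)$ is color-preserving, since every map in the sequence respects the active-ID coloring, and non-trivial, since on $0$-chains it carries each vertex of $\sigma^t$ along the sequence to a single vertex of $\cA^t$ with coefficient $1$. As $\cG$ is exactly the symmetric group on a set of $t+1$ elements, $a$ is equivariant in the sense required by Theorem~\ref{theo}, and Lemma~\ref{lemma2} delivers the contradiction. The message-passing model is handled the same way, using the protocol complex of a message-passing protocol in place of that of $\cP$.

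The step I expect to be the main obstacle is the construction of $\phi$: one must verify that the $t$-connectivity of $\cP_a^{*}(\chi(\sigma^n))$ genuinely supports the inductive extension and, more delicately, that the extension can be made equivariant for the \emph{same} group $\cG$ under which $\delta$ and $\pi$ are equivariant, so that the whole sequence is equivariant for a single action. Keeping straight which symmetric group acts on which set of process IDs, and justifying that the passive output names are pinned down at all, is where the bookkeeping is most delicate; once $\phi$ is in hand, the remainder is routine functoriality of the chain construction together with the already-established Theorem~\ref{theo}.
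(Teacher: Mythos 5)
Your proposal follows essentially the same route as the paper: partition the IDs into $t+1$ active and $n-t$ passive processes, restrict to the protocol subcomplex $\cP^{*}$ of executions in which passives do not fail, use $t$-connectivity of $\cP_a^{*}(\chi(\sigma^n))$ together with anonymity to manufacture an $\cS_{t+1}$-equivariant simplicial map $\phi$ from a subdivided $t$-simplex, observe that passives are pinned to fixed output names, project the active names to the $t$-dimensional annulus by parity, and invoke Theorem~\ref{theo} (via its necessity half, Lemma~\ref{lemma2}) to derive the contradiction. The only real difference is that you spell out the skeleton-by-skeleton obstruction argument for building $\phi$ equivariantly and explicitly flag it as the delicate step; the paper asserts the existence of $\phi$ at exactly the same level of informality, so your identification of where the bookkeeping burden lies is accurate.
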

It is unknown whether there is a protocol if the coefficients are relatively prime.

\appendix

\newpage

\section{Proofs}

\subsection{Proofs of Section \ref{sec:necessity}}

For distinct $i_0, i_1, \hdots, i_q \in [n]$, $q \leq n-1$,
let ${\cal S}^q_{i_0 i_1 \hdots i_q}$ denote the subcomplex of ${\cal A}^n$
that contains all $q$-simplexes, and all its faces, that are properly colored with $i_0, i_1, \hdots, i_q$.
It is not hard to see that ${\cal S}^q_{i_0 i_1 \hdots i_q}$ is a sphere of dimension $q$.  

\begin{lemma}
\label{lemmaspheres}
Let ${\cal S}$ be a sphere of dimension $n$. 
Then every $\ell$-cycle is a boundary, $\ell \leq n-1$.
\end{lemma}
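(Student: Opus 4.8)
\medskip
\noindent\textbf{Proof plan.}
The assertion is exactly the statement that the reduced homology of an $n$-sphere vanishes below its top dimension: in the language of the augmented chain complex $\cC(\cS)$ from Section~3 (so that $\cC_{-1}(\cS)=\mathbb{Z}$ and $\partial_0$ is the augmentation), the claim is that the $\ell$-th homology group of $\cC(\cS)$ is trivial for $-1\le\ell\le n-1$. The first thing I would do is reduce to a single convenient triangulation of $S^n$: since $\cS$ is by hypothesis a simplicial complex homeomorphic to the standard $n$-sphere, and since simplicial homology is invariant under subdivision (the same fact already used in Section~3 when passing from a complex to a subdivision) and hence is a topological invariant, it suffices to verify the vanishing for one triangulation.

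Next I would take that triangulation to be $\bd\tau$, the boundary complex of a single $(n+1)$-simplex $\tau$. Two observations then finish the argument. First, $\cC(\tau)$ is acyclic in the augmented sense, i.e.\ every cycle of $\cC(\tau)$ in degree $\ge -1$ is a boundary; this is the standard fact that a cone is acyclic, provable directly by the explicit chain contraction that cones every simplex to a chosen vertex $v_0$ and satisfies $\partial D+D\partial=\id$ in the appropriate range. Second, $\bd\tau$ and $\tau$ have \emph{the same} chain groups $\cC_q$, with the same boundary maps, in every dimension $q\le n$; they differ only in that $\tau$ has the single extra generator in dimension $n+1$. Consequently, for $\ell\le n-1$ we have $\ell+1\le n$, so both the $\ell$-cycles and the $\ell$-boundaries of $\bd\tau$ literally coincide with those of $\tau$, and the acyclicity of $\cC(\tau)$ gives that every $\ell$-cycle of $\bd\tau$ is an $\ell$-boundary.

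A self-contained alternative, which avoids quoting topological invariance and is closer to the octahedral spheres $\cS^q_{i_0\cdots i_q}$ actually used in this paper (boundary complexes of cross-polytopes), is an induction on $n$ via a simplicial Mayer--Vietoris sequence: write $\cS$ as the union of the closed stars of a pair of ``antipodal'' vertices, each of which is a cone and hence acyclic, meeting in a subcomplex that is a sphere of dimension $n-1$; the Mayer--Vietoris sequence attached to the short exact sequence of chain complexes then yields an isomorphism between the $\ell$-th homology of $\cS$ and the $(\ell-1)$-st homology of an $(n-1)$-sphere, which vanishes for $\ell-1\le n-2$ by the inductive hypothesis, with the base case $n=0$ immediate. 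In either approach the only point needing a moment's care is the low-degree end ($\ell=-1$ and $\ell=0$), where the argument rests on the augmentation $\partial_0\colon\cC_0\to\cC_{-1}=\mathbb{Z}$ being onto and on $\cS$ being connected; I expect no real obstacle beyond this bookkeeping, and one may simply cite Munkres for the underlying homology-of-spheres computation if a short proof is preferred to the self-contained one.
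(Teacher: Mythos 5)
The paper itself gives no proof of Lemma~\ref{lemmaspheres}; it is stated as a standard fact from algebraic topology and used freely in the appendix (essentially the assertion that the reduced simplicial homology of a triangulated $n$-sphere vanishes in all degrees $\le n-1$). Your argument is a correct and complete proof of exactly that fact. Your first route---invariance of simplicial homology under subdivision/homeomorphism, reduction to $\bd\tau$ for an $(n+1)$-simplex $\tau$, and the observation that $\cC_q(\bd\tau)=\cC_q(\tau)$ for $q\le n$ together with acyclicity of the cone $\tau$---is the textbook argument one would find in Munkres, and it cleanly covers the low-degree end $\ell\in\{-1,0\}$ via the augmentation. Your second route, a Mayer--Vietoris induction using the two closed vertex stars, is equally valid and, as you note, sits more naturally with the spheres actually used in the paper: the complexes $\cS^q_{i_0\cdots i_q}$ are joins of $q+1$ copies of a $0$-sphere (one for each color $i_j$, with vertices $(i_j,0)$ and $(i_j,1)$), i.e.\ boundaries of cross-polytopes, for which the antipodal-star decomposition is immediate. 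Either argument would serve; since the paper omits the proof, there is nothing to reconcile with, and your write-up fills a genuine gap rather than diverging from an existing one.
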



\begin{lemma}[\cite{hsjacm99}]
\label{lemmahsjacm2}
Let $S_i$ be the cycle obtained by orienting each $(n-1)$-simplex of $\cS^{n-1}_{0 \hdots \widehat i \hdots n}$.
Then, every $(n-1)$-cycle of ${\cal C}({\cal O}^n) $ is homologous to $k \hbox{ } S$ for some integer $k$.
\end{lemma}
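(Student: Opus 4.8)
\emph{Proof proposal.} The plan is to realise the annulus $\cA^n$ as a triangulated $n$-sphere with the interiors of two top-dimensional faces removed, and then to run an elementary chain-level chase that pushes a bounding chain from the sphere down to the annulus; this reduces an arbitrary $(n-1)$-cycle of $\cC(\cA^n)$ to an integer multiple of $\partial 0^n$, which in turn is homologous to $\pm S_i$ for every $i$. Throughout, $\cA^n$ is the output complex ($=\cO^n$).

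First I would re-adjoin to $\cA^n$ the two monochromatic $n$-simplexes $0^n = \ang{(0,0)\hdots(n,0)}$ and $1^n = \ang{(0,1)\hdots(n,1)}$, together with all their faces, obtaining a complex $\cB^n$. Identifying the vertex $(i,0)$ with $e_i$ and the vertex $(i,1)$ with $-e_i$ in $\mathbb{R}^{n+1}$, a set of vertices spans a simplex of $\cB^n$ precisely when its process IDs are distinct, i.e.\ when it contains at most one of $e_i,-e_i$ for each $i$; hence $\cB^n$ is combinatorially isomorphic to the boundary of the $(n{+}1)$-dimensional cross-polytope, a triangulated $n$-sphere, and the \emph{only} $n$-simplexes of $\cB^n$ that do not lie in $\cA^n$ are $0^n$ and $1^n$. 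Lemma~\ref{lemmaspheres} then applies with $\cS=\cB^n$: every $(n-1)$-cycle of $\cC(\cB^n)$ is a boundary in $\cC(\cB^n)$.

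Then, given an $(n-1)$-cycle $c$ of $\cC(\cA^n)$, I would regard it inside $\cC(\cB^n)$ and choose $w\in\cC_n(\cB^n)$ with $\partial w = c$. Because $\cC_n(\cB^n) = \cC_n(\cA^n)\oplus\mathbb{Z}\,0^n\oplus\mathbb{Z}\,1^n$, I can write $w = w' + m_0\,0^n + m_1\,1^n$ with $w'\in\cC_n(\cA^n)$; then $\partial w'\in\cC_{n-1}(\cA^n)$ is a boundary in $\cC(\cA^n)$, so $c \thicksim m_0\,\partial 0^n + m_1\,\partial 1^n$ there. Applying the same splitting to the fundamental cycle $Z$ of the sphere $\cB^n$ --- the coherently oriented sum of all its $n$-simplexes, for which $\partial Z = 0$ and every coefficient is $\pm 1$ --- I get $Z = Z' + \epsilon_0\,0^n + \epsilon_1\,1^n$ with $\epsilon_0,\epsilon_1\in\{\pm 1\}$ and $Z'\in\cC_n(\cA^n)$, whence $\epsilon_0\,\partial 0^n + \epsilon_1\,\partial 1^n = -\partial Z' \thicksim 0$ in $\cC(\cA^n)$, i.e.\ $\partial 1^n \thicksim \pm\,\partial 0^n$. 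Therefore $c \thicksim k\,\partial 0^n$ with $k = m_0\pm m_1\in\mathbb{Z}$. Finally, to trade $\partial 0^n$ for $S_i$, I would use that the closed star $\cN$ of $(i,0)$ in $\cB^n$ is the cone $(i,0)*\cS^{n-1}_{0\hdots\widehat i\hdots n}$, a combinatorial $n$-ball whose fundamental $n$-chain $W = (i,0)\cdot S_i$ satisfies $\partial W = \pm S_i$; since $0^n$ occurs in $W$ with coefficient $\pm 1$ and $1^n\notin\cN$, writing $W = W'\pm 0^n$ with $W'\in\cC_n(\cA^n)$ gives $\pm S_i = \partial W = \partial W'\pm\partial 0^n$, so $S_i \thicksim \pm\,\partial 0^n$ in $\cC(\cA^n)$. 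Combining these steps, every $(n-1)$-cycle of $\cC(\cA^n)$ is homologous to an integer multiple of $S_i$, and the cycles $S_i$ for different $i$ are all homologous to one another.

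The hard part is bookkeeping rather than conceptual depth: making the identification $\cB^n\cong\partial(\text{cross-polytope})$ precise enough to invoke Lemma~\ref{lemmaspheres}, and tracking orientations and signs through the three splittings (of $w$, of $Z$, and of the star-cone $\cN$) so that $w'$, $Z'$ and $W'$ provably lie in $\cC_n(\cA^n)$ rather than only "morally" so. The single load-bearing structural fact is that $0^n$ and $1^n$ are \emph{exactly} the $n$-simplexes deleted in passing from the sphere $\cB^n$ to the annulus $\cA^n$: this is what forces the relevant homology to have rank at most one, and it enters at each splitting step. (Strictly, the above shows only that $\partial 0^n$ generates all $(n-1)$-cycles up to homology; that it is not itself a boundary --- so that the integer $k$ is a genuine invariant --- is a companion fact, provable by the same comparison with the fundamental cycle $Z$ and recorded in the main text.)
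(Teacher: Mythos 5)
The paper does not supply its own proof of this lemma: it is cited verbatim from Herlihy--Shavit~\cite{hsjacm99}, so there is no argument in the text to compare yours against. Evaluated on its own, your proof is correct and the central structural observation is exactly right: re-adjoining the two monochromatic $n$-simplexes $0^n$ and $1^n$ turns $\cA^n$ into (a complex isomorphic to) the boundary of the $(n{+}1)$-dimensional cross-polytope, and these two are \emph{precisely} the facets missing from $\cA^n$, so $\cC_q(\cB^n)=\cC_q(\cA^n)$ for $q\le n-1$ while $\cC_n(\cB^n)=\cC_n(\cA^n)\oplus\mathbb{Z}\,0^n\oplus\mathbb{Z}\,1^n$. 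From there the three splittings (of $w$, of the fundamental cycle $Z$, and of the star-cone chain $W$) are all sound, and the conclusion $c\thicksim k\,\partial 0^n\thicksim\pm k\,S_i$ follows. One small point worth surfacing: you invoke Lemma~\ref{lemmaspheres} for the vanishing of $(n-1)$-cycles modulo boundaries in $\cB^n$, but the existence of the fundamental $n$-cycle $Z$ with all coefficients $\pm 1$ is a separate fact (orientability of the triangulated $n$-sphere) not covered by that lemma; for the cross-polytope it is trivial to exhibit $Z=\sum_{\epsilon\in\{0,1\}^{n+1}}(-1)^{|\epsilon|}\ang{(0,\epsilon_0)\hdots(n,\epsilon_n)}$ and check the cancellation, and it would be worth saying so rather than treating it as an immediate consequence of the cited lemma. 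Finally, your reading of $\cO^n$ as the annulus $\cA^n$ (not the sphere of the sufficiency section, which reuses the same symbol) is the only one that makes the lemma non-vacuous and consistent with how it is applied in the proof of Theorem~\ref{theo2}.
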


\begin{lemma}[\cite{hsjacm99}]
\label{lemmahsjacm3}
Let $S_i$ be the cycle obtained by orienting the $(n-1)$-simplexes of 
$\cS^{n-1}_{0 \hdots \widehat i \hdots n}$ such that its $0$-monochromatic $(n-1)$-simplex is oriented
in increasing $\ids$ order. Then, $S_i \thicksim (-1)^i \partial 0^n$.
\end{lemma}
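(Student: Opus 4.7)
The plan is to produce an explicit $n$-chain $C_i \in \cC_n(\cA^n)$ with $\partial C_i = S_i - (-1)^i \partial 0^n$, which directly witnesses the asserted homology. By Lemma \ref{lemmahsjacm2}, $S_i$ is already known to be homologous to some integer multiple $k\,\partial 0^n$, so the remaining issue is to identify $k = (-1)^i$. The natural candidate for $C_i$ is the cone from the vertex $(P_i,0)$ over the cycle $S_i$ with its all-$0$ face removed; the removal is essential because coning $(P_i,0)$ with the all-$0$ $(n-1)$-simplex would produce the all-$0$ $n$-simplex, which is excluded from $\cA^n$, whereas every other $(n-1)$-face appearing in $S_i$ contains at least one vertex labeled $1$ and hence yields a legitimate mixed $n$-simplex in $\cA^n$.

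Concretely, let $\tau_0 = \ang{(P_0,0) \hdots \widehat{(P_i,0)} \hdots (P_n,0)}$ denote the all-$0$ face (oriented in increasing $\ids$ order, so it appears with coefficient $+1$ in $S_i$), and define $C_i = (P_i,0) * (S_i - \tau_0)$, where $*$ is the cone operation extended linearly to chains. The discussion above shows that $C_i$ is a well-defined chain in $\cC_n(\cA^n)$. Using the standard identity $\partial(v * X) = X - v * \partial X$ together with $\partial S_i = 0$, one computes
\begin{equation*}
\partial C_i = (S_i - \tau_0) + (P_i,0) * \partial \tau_0.
\end{equation*}

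It remains to identify $(P_i,0) * \partial \tau_0$. The key observation is that the oriented $n$-simplex $\tilde\tau_0 = \ang{(P_i,0),(P_0,0),\ldots,\widehat{(P_i,0)},\ldots,(P_n,0)}$ equals $(-1)^i \ang{(P_0,0) \hdots (P_n,0)}$, because moving $(P_i,0)$ back to its natural slot costs $i$ adjacent transpositions; hence $\partial \tilde\tau_0 = (-1)^i \partial 0^n$. Applying the cone identity once more to $\tilde\tau_0 = (P_i,0) * \tau_0$ yields $(P_i,0) * \partial \tau_0 = \tau_0 - (-1)^i \partial 0^n$, and substituting into the previous display gives $\partial C_i = S_i - (-1)^i \partial 0^n$, as required.

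The main point that needs care is the permutation-sign bookkeeping: the factor $(-1)^i$ in the statement is produced by precisely the $i$ transpositions used to relate $\tilde\tau_0$ to the standard all-$0$ $n$-simplex, and one must verify that the coefficients of $\tau_0$ in $S_i$ and in $(-1)^i \partial 0^n$ match so that they cancel cleanly inside $\partial C_i$. Beyond this bookkeeping the argument is entirely formal, relying only on the cone identity and on the fact that coning from $(P_i,0)$ stays inside $\cA^n$ once the all-$0$ face is removed.
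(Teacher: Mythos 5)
Your argument is correct. Note, however, that the paper does not prove Lemma~\ref{lemmahsjacm3} at all: it is imported from Herlihy--Shavit~\cite{hsjacm99}, where the identification of the multiple $k=(-1)^i$ is obtained inside their homology computations for the annulus. Your explicit cone construction is therefore a genuinely self-contained alternative, and it checks out: every $(n-1)$-simplex of $\cS^{n-1}_{0\hdots \widehat{i} \hdots n}$ other than the all-$0$ face $\tau_0$ carries at least one vertex with value $1$, so coning from $(P_i,0)$ lands in $\cA^n$ (including the all-$1$ face, whose cone is mixed); the identity $\partial\bigl((P_i,0)*X\bigr)=X-(P_i,0)*\partial X$ with $\partial S_i=0$ gives $\partial C_i=(S_i-\tau_0)+(P_i,0)*\partial\tau_0$; and the sign $(-1)^i$ comes out of the $i$ transpositions needed to place $(P_i,0)$ in its natural slot. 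One small point worth making explicit: the auxiliary simplex $\tilde\tau_0=(P_i,0)*\tau_0$ is the all-$0$ $n$-simplex and hence does not belong to $\cA^n$, so the step $(P_i,0)*\partial\tau_0=\tau_0-(-1)^i\partial 0^n$ should be justified either by direct sign bookkeeping on the $(n-1)$-simplexes involved (all of which do lie in $\cA^n$) or by performing the formal computation in the chain complex of the full simplex on the vertices $(P_0,0),\hdots,(P_n,0)$ and observing that the resulting identity only involves chains supported on $\cA^n$. With that caveat recorded, your proof is complete, and it has the advantage over the citation of exhibiting the homology explicitly rather than appealing to an external computation; it uses Lemma~\ref{lemmahsjacm2} only as motivation, not as an ingredient.
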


In what follows, for $0 \leq i \leq m \leq n$, let $\pi^m_i$ denote the permutation defined as follows:
$$
\pi^m_i = \left(
\begin{array}{cccccccccc}
0 & \hdots & i-1 & i & \hdots & m-1 & m & m+1 &  \hdots & n\\
0 & \hdots & i-1 & i+1 & \hdots & m & i & m+1 &  \hdots & n
\end{array}
\right)
$$

\noindent
\emph{Lemma \ref{lemma1} (Restated)}
For each subset $s$ of $[n]$ there are families of equivariant homomorphisms
\begin{align*}
d^s_q: \cC^q(\sigma^n)	&\to \cC^{q+1}(\cA^n)	\\
f^s_p: \cC^p(\sigma^n)	&\to \cC^p(\cA^n)
\end{align*}
for $-1 \leq q \leq n-2$ and $0 \leq p \leq n-1$.
Moreover, for any proper $q$-dimensional face $\sigma$ of $\sigma^n$, the chain 
\begin{equation*}
a(\sigma) - z(\sigma) - d^{\ids(\sigma)}(\partial \sigma) -
\sum_{\sigma' \in \sk^{q-2}(\sigma)} f^{\ids(\sigma')}(\sigma)
\end{equation*}
is a $q$-cycle.

\noindent
\begin{proof}
We proceed by induction on the dimension of the faces of $\sigma^n$.
Unless stated otherwise, $d^s = 0$ and $f^s = 0$.
For the rest of the proof let $\sigma_{i_0 i_1 \hdots i_j}$ denote the 
oriented face $\ang{i_0 i_1 \hdots i_j}$ of $\sigma^n$.

For dimension $0$ it is easy to see that, for each $0$-face $\sigma$ of $\sigma^n$,
 $a(\sigma) - z(\sigma)$ is a $0$-cycle.
For dimension $1$, consider the face $\sigma_0$ and the set $\{ 0, 1 \}$.
We have that $a(\sigma_0) - z(\sigma_0)$ is a $0$-cycle.
Moreover, since $a$ is color-preserving and by the definition of $z$, 
$a(\sigma_0), z(\sigma_0) \in \cC(\cS^0_{0})$.
By Lemma \ref{lemmaspheres} and since $\cS^0_{0} \subset \cS^1_{01}$, 
there is a $1$-chain $d^{01}(\sigma_0) \in \cC(\cS^1_{01})$ 
such that $\partial d^{01}(\sigma_0) = a(\sigma_0) - z(\sigma_0)$.
Now, using $d^{01}(\sigma_0)$, we ``symmetrically" define the value of
$d$ for each pair of $0$-face $\sigma$ and set $s$ of size $2$ such that
$\ids(\sigma) \subset s$, namely, $d^{\pi(01)}(\pi(\sigma_0)) = d^s(\sigma) = \pi(d^{01}(\sigma_0))$,
where $\pi$ is a permutation such that $\sigma = \pi(\sigma_0)$ and $s = \pi(01)$.
In this way
\begin{eqnarray*}
\partial d^s(\sigma) &=&  \partial \pi(d^{01}(\sigma_0)) = \pi(\partial d^{01}(\sigma_0))\\
&=& \pi(a(\sigma_0) - z(\sigma_0)) = a(\pi(\sigma_0)) - z(\pi(\sigma_0)) = a(\sigma) - z(\sigma)
\end{eqnarray*}
Observe that $d^s(\sigma) \in \cC(\cS^1_s)$.

For example, for the $0$-face $\sigma_1$, 
$d^{\pi^1_0(01)}(\pi^1_0(\sigma_0)) = d^{01}(\sigma_1) = \pi^1_0(d^{01}(\sigma_0))$.
Observe that the election of $d^{01}(\sigma_0)$ allows to achieve an equivariant $d$. 
Thus, we have that $\partial d^{01}(\sigma_0) = a(\sigma_0) - z(\sigma_0)$ and
$\partial d^{01}(\sigma_1) = a(\sigma_1) - z(\sigma_1)$, hence
\begin{eqnarray*}
\partial d^{01}(\sigma_1) - \partial d^{01}(\sigma_0) &=& 
a(\sigma_1) - z(\sigma_1) - (a(\sigma_0) - z(\sigma_0)) \\
\partial d^{01}(\partial \sigma_{01}) &=& a(\partial \sigma_{01}) - z(\partial \sigma_{01})\\
0 &=& \partial( a(\sigma_{01}) - z(\sigma_{01}) - d^{01}(\partial \sigma_{01}) )
\end{eqnarray*}
Thus, $a(\sigma_{01}) - z(\sigma_{01}) - d^{01}(\partial \sigma_{01})$
is a $1$-cycle. This complete the basis of the induction,
however we present the case for dimension $2$ to illustrate the idea.

Consider the face $\sigma_{01}$ and the set $\{ 0, 1, 2 \}$.
We have proved that $a(\sigma_{01}) - z(\sigma_{01}) - d^{01}(\partial \sigma_{01})$
is a $1$-cycle. Moreover, since $a$ and $z$ are color-preserving, and by the previous step,
we have that $a(\sigma_{01}), z(\sigma_{01}), d^{01}(\partial \sigma_{01}) \in \cC(\cS^1_{01})$.
By Lemma \ref{lemmaspheres} and since $\cS^1_{01} \subset \cS^2_{012}$, 
there exists a $2$-chain $d^{012}(\sigma_{01}) \in \cC(\cS^2_{012})$ such that 
$\partial d^{012}(\sigma_{01}) = a(\sigma_{01}) - z(\sigma_{01}) - d^{01}(\partial \sigma_{01})$.
Using $d^{012}(\sigma_{01})$, we ``symmetrically" define the value of
$d$ for all pair of $1$-face $\sigma$ and set $s$ of size $3$ such that $\ids(\sigma) \subset s$.
For example, $d^{\pi^2_0(012)}(\pi^2_0(\sigma_{01})) = d^{012}(\sigma_{12}) = \pi^2_0(d^{012}(\sigma_{01}))$
and $d^{\pi^2_1(012)}(\pi^2_1(\sigma_{01})) = d^{012}(\sigma_{02}) = \pi^2_1(d^{012}(\sigma_{01}))$.
So we have 
\begin{eqnarray*}
\partial d^{012}(\sigma_{01}) &=& a(\sigma_{01}) - z(\sigma_{01}) - d^{01}(\partial \sigma_{01})\\
\partial d^{012}(\sigma_{12}) &=& a(\sigma_{12}) - z(\sigma_{12}) - d^{12}(\partial \sigma_{12})\\
\partial d^{012}(\sigma_{02}) &=& a(\sigma_{02}) - z(\sigma_{02}) - d^{02}(\partial \sigma_{02})
\end{eqnarray*}
Taking the alternating sign sum over $\sigma_{01}, \sigma_{12}, \sigma_{02}$,
\begin{eqnarray*}
\partial d^{012}(\sigma_{01}) - d^{012}(\sigma_{02}) + d^{012}(\sigma_{12}) 
&=& + (a(\sigma_{01}) - z(\sigma_{01}) - d^{01}(\partial \sigma_{01}))\\
& & - (a(\sigma_{02}) - z(\sigma_{02}) - d^{02}(\partial \sigma_{02}))\\
& & + (a(\sigma_{12}) - z(\sigma_{12}) - d^{12}(\partial \sigma_{12}))\\
\partial d^{012}(\partial \sigma_{012}) 
&=& a(\partial \sigma_{012}) - z(\partial \sigma_{012}) - \gamma
\end{eqnarray*}
where $\gamma = d^{12}(\partial \sigma_{12}) - d^{02}(\partial \sigma_{02}) + d^{01}(\partial \sigma_{01})$.
Thus  
\begin{equation}
\label{eq1n=2}
\partial( a(\sigma_{012}) - z(\sigma_{012}) - d^{012}(\partial \sigma_{012})) - \gamma = 0
\end{equation}

Now, we have that 
\begin{eqnarray*}
\gamma &=& d^{12}(\partial \sigma_{12}) - d^{02}(\partial \sigma_{02}) + d^{01}(\partial \sigma_{01}))\\
&=& (d^{12}(\sigma_2) - d^{12}(\sigma_1)) - (d^{02}(\sigma_2) - d^{02}(\sigma_0))
+ (d^{01}(\sigma_1) - d^{01}(\sigma_0))
\end{eqnarray*}
Considering the result of the boundary operator over the terms where $\sigma_0$ appears, we get
\begin{eqnarray*}
\partial( d^{02}(\sigma_0) - d^{01}(\sigma_0) ) &=& 
\partial d^{02}(\sigma_0) - \partial d^{01}(\sigma_0)\\
&=& a(\sigma_0) - z(\sigma_0) - ( a(\sigma_0) - z(\sigma_0) )\\ 
&=& 0
\end{eqnarray*}
Thus, $d^{02}(\sigma_0) - d^{01}(\sigma_0)$ is a $1$-cycle. 
The same happens with the terms where $\sigma_1$ and $\sigma_2$ appear, respectively.
Now, by Lemma \ref{lemmaspheres} and since $d^{01}(\sigma_0)\in \cC(\cS^1_{01})$ and
$d^{02}(\sigma_0) \in \cC(\cS^1_{02})$,
there is a $2$-chain $f^0(\sigma_{012}) \in \cC(\cS^2_{012})$
such that $\partial f^0(\sigma_{012}) = d^{02}(\sigma_0) - d^{01}(\sigma_0)$.
The value $f^0(\sigma_{012})$ induces the value of $f$ for all pair of 
$2$-face $\sigma$ and set $s$ of size $1$ such that $s \subset \ids(\sigma)$.
For example, $f^{\pi^2_0(0)}(\pi^2_0(\sigma_{012})) = f^1(\sigma_{120}) = f^1(\sigma_{012}) = \pi^2_0(f^0(\sigma_{012}))$.
Observe that $f^1(\sigma_{012}), f^2(\sigma_{012}) \in \cC(\cS^2_{012})$.
Therefore,
\begin{equation}
\label{eq2n=2}
\gamma = \partial f^0(\sigma_{012}) + \partial f^1(\sigma_{012}) + \partial f^2(\sigma_{012})
= \partial \sum_{ \sigma \in \sk^0(\sigma_{012}) } f^{\ids(\sigma)}(\sigma_{012}) 
\end{equation}
Combining equations (\ref{eq1n=2}) and (\ref{eq2n=2}) we get
$$0 = \partial \left( a(\sigma_{012}) - z(\sigma_{012}) - d^{012}(\partial \sigma_{012})  - 
\sum_{ \sigma \in \sk^0(\sigma_{012}) } f^{\ids(\sigma)}(\sigma_{012}) \right)$$
hence the lemma holds for $n=2$.
Roughly speaking, $f^i(\sigma_{012})$, $i \in \{ 0, 1, 2 \}$, is what the $0$-dimensional face 
$\sigma_i$ of $\sigma_{012}$ adds in obtaining the $2$-cycle for $\sigma_{012}$.

Assume the lemma holds for faces of dimension $q-1$, $0 \leq q \leq n-1$. 
We prove the lemma holds for faces of dimension $q$.
Also, for each $(q-1)$-dimensional face $\sigma = \sigma_{c_0 \hdots c_{q-1}}$, assume the following.

\begin{enumerate}

\item For every $(q-2)$-dimensional face $\sigma'$ of $\sigma$,
$d^{\ids(\sigma)}(\sigma') \in \cC(\cS^{q-1}_{\ids(\sigma)})$,
and for each $\ell$-dimensional face $\sigma'$ of $\sigma$, $\ell \leq q-3$,
$f^{\ids(\sigma')}(\sigma) \in \cC(\cS^{q-1}_{\ids(\sigma)})$.

\item For every $(q-2)$-dimensional face $\sigma'$ of $\sigma$,
$$\partial d^{\ids(\sigma)}(\sigma') = 
a(\sigma') - z(\sigma') - d^{\ids(\sigma')}(\partial \sigma') - 
\sum_{\sigma'' \in \sk^{q-4}(\sigma') } f^{\ids(\sigma'')}(\sigma')$$

\item For every $(q-3)$-dimensional face $\sigma' = \sigma_{c_0 \hdots \widehat{c_i} \hdots\widehat{c_j} \hdots c_{q-1}}$ 
of $\sigma$,
$$\partial f^{\ids(\sigma')}(\sigma) = (-1)^{i+j} (d^{\ids(\sigma_j)}(\sigma') - d^{\ids(\sigma_i)}(\sigma'))$$
where $\sigma_i = \sigma_{c_0 \hdots \widehat{c_i} \hdots c_{q-1}}$ and 
$\sigma_j = \sigma_{c_0 \hdots \widehat{c_j} \hdots c_{q-1}}$.

\item For every $k$-dimensional face $\sigma' $ of $\sigma$, $k \leq q-4$,
$$\partial f^{\ids(\sigma')}(\sigma) =  
\sum_{c_i \in \ids(\sigma), c_i \notin \ids(\sigma')} 
(-i)^i f^{\ids(\sigma')}(\sigma_i)$$
where $\sigma_i = \sigma_{c_0 \hdots \widehat{c_i} \hdots c_{q-1}}$.
\end{enumerate}

Consider the $q$-simplex $\sigma = \sigma_{0 \hdots q}$.
Let $\sigma_i$ be the $(q-1)$-dimensional face $\sigma_{0 \hdots \widehat{i} \hdots q}$ of $\sigma$.
By induction hypothesis,
$$a(\sigma_i) - z(\sigma_i) - d^{\ids(\sigma_i)}(\partial \sigma_i) - 
\sum_{\sigma' \in \sk^{q-3}(\sigma_i)} f^{\ids(\sigma')}(\sigma_i)$$
is a $(q-1)$-cycle. Consider the $(q-1)$-dimensional face $\sigma_q$.
By induction hypothesis, for each $(q-2)$-dimensional face $\sigma'$ of $\sigma_q$, 
$d^{0 \hdots q-1}(\sigma') \in \cC(\cS^{q-1}_{0 \hdots q-1})$, 
and for each $\ell$-dimensional face $\sigma'$ of $\sigma_q$, $\ell \leq q-3$,
$f^{\ids(\sigma')}(\sigma_q) \in \cC(\cS^{q-1}_{0 \hdots q-1})$. 
Also, $a(\sigma_q), z(\sigma_q) \in \cC(\cS^{q-1}_{0 \hdots q-1})$,
because $a$ and $z$ are color-preserving.
By Lemma \ref{lemmaspheres} and since $\cS^{q-1}_{0 \hdots q-1} \subset \cS^q_{0 \hdots q}$, 
there is a $q$-chain $d^{0 \hdots q}(\sigma_q) \in \cC(\cS^q_{0 \hdots q})$
such that 
$$\partial d^{0 \hdots q}(\sigma_q) = a(\sigma_q) - z(\sigma_q) - d^{\ids(\sigma_q)}(\partial \sigma_q) - 
\sum_{\sigma' \in \sk^{q-3}(\sigma_q)} f^{\ids(\sigma')}(\sigma_q)$$
Using $d^{0 \hdots q}(\sigma_{q})$, we ``symmetrically" define the value of
$d^s(\sigma') = \pi(d^{0 \hdots q}(\sigma_{q}))$,
where $dim(\sigma') = q-1$, $\vert s \vert = q+1$,  $\ids(\sigma') \subset s$,
$\pi(\sigma_q) = \sigma'$ and $\pi(\{0, \hdots, q\}) = s$.
Therefore, for each face $\sigma_i$ of $\sigma$
$$\partial d^{0 \hdots q}(\sigma_i) = a(\sigma_i) - z(\sigma_i) - d^{\ids(\sigma_i)}(\partial \sigma_i) - 
\sum_{\sigma' \in \sk^{q-3}(\sigma_i)} f^{\ids(\sigma')}(\sigma_i)$$
and $d^{0 \hdots q}(\sigma_i) \in \cC(\cS^q_{0 \hdots q})$.

Taking the alternating sign sum over all $(q-1)$-faces of $\sigma$, we get
\begin{eqnarray*}
\sum^{q}_{i=0} (-1)^i \partial d^{0 \hdots q}(\sigma_i) &=&
\sum^{q}_{i=0} (-1)^i \left( a(\sigma_i) - z(\sigma_i) - d^{\ids(\sigma_i)}(\partial \sigma_i) - 
\sum_{\sigma' \in \sk^{q-3}(\sigma_i)} f^{\ids(\sigma')}(\sigma_i) \right)\\
\partial d^{0 \hdots q}(\partial \sigma) &=& a(\partial \sigma) - z(\partial \sigma) - 
\gamma - \lambda\\
0 &=& \partial ( a(\sigma) - z(\sigma) - d^{0 \hdots q}(\partial \sigma) ) - \gamma - \lambda
\end{eqnarray*}
where 
\begin{eqnarray*}
\gamma &=& \sum^{q}_{i=0} (-1)^i d^{\ids(\sigma_i)}(\partial \sigma_i)\\
\lambda &=& \sum^{q}_{i=0} (-1)^i \sum_{\sigma' \in \sk^{q-3}(\sigma_i)} f^{\ids(\sigma')}(\sigma_i)
\end{eqnarray*}
We now extend $d$ and $f$ such that 
\begin{equation}
\label{eq0induction}
\partial ( a(\sigma) - z(\sigma) - d^{\ids(\sigma)}(\partial \sigma) ) - \gamma - \lambda
\end{equation}
is a $q$-cycle. Intuitively, we will see that $\gamma$ and $\lambda$ are made of 
$(q-1)$-cycles, hence there are $q$-chains $\gamma'$ and $\lambda'$ such that 
$\partial \gamma' = \gamma$ and $\partial \lambda' = \lambda$. 
Combining $\partial \gamma'$ and $\partial \lambda'$ with Equation (\ref{eq0induction}), 
we get $a(\sigma) - z(\sigma) - d^{\ids(\sigma)}(\partial \sigma) - \gamma' - \lambda'$
is a $q$-cycle, since we know that 
$\partial ( a(\sigma) - z(\sigma) - d^{\ids(\sigma)}(\partial \sigma) ) - \gamma - \lambda = 0$.
As we shall see, $\gamma'$ and $\lambda'$ are the $q$-chains the lemma requires.

First, let us consider $\gamma$. Let $\sigma_{ij}$ denote
the $(q-2)$-dimensional face $\sigma_{0 \hdots \widehat i \hdots \widehat j \hdots q}$ of $\sigma$.
Observe that
\begin{eqnarray*}
\partial \gamma = \partial \sum^{q}_{i=0} (-1)^i d^{\ids(\sigma_i)}(\partial \sigma_i) &=& 
\partial \sum^{q}_{i=0} (-1)^i \left( \sum^{i-1}_{j=0} (-1)^j d^{\ids(\sigma_i)}(\sigma_{ji}) + 
\sum^q_{j=i+1} (-1)^{j-1} d^{\ids(\sigma_i)}(\sigma_{ij}) \right)\\
&=& \sum^q_{i=0} \sum^q_{j=i+1}  (-1)^{i+j} \partial \left( d^{\ids(\sigma_j)}(\sigma_{ij}) - 
d^{\ids(\sigma_i)}(\sigma_{ij}) \right)
\end{eqnarray*}
By induction hypothesis, $\partial d^{\ids(\sigma_j)}(\sigma_{ij}) = \partial d^{\ids(\sigma_i)}(\sigma_{ij})$,
thus the $(q-1)$-chain
$d^{\ids(\sigma_j)}(\sigma_{ij}) - d^{\ids(\sigma_i)}(\sigma_{ij})$
is a cycle.
In addition, $d^{\ids(\sigma_i)}(\sigma_{ij}) \in \cC(\cS^{q-1}_{\ids(\sigma_i)})$ and 
$d^{\ids(\sigma_j)}(\sigma_{ij}) \in \cC(\cS^{q-1}_{\ids(\sigma_j)})$, by induction hypothesis.
By Lemma \ref{lemmaspheres} and since 
$\cS^{q-1}_{\ids(\sigma_i)}, \cS^{q-1}_{\ids(\sigma_j)} \subset \cS^q_{0 \hdots q}$,
there exists a $q$-chain $f^{\ids(\sigma_{ij})}(\sigma) \in \cC(\cS^q_{0 \hdots q})$ such that 
$$\partial f^{\ids(\sigma_{ij})}(\sigma) = 
(-1)^{i+j} \left( d^{\ids(\sigma_j)}(\sigma_{ij}) - d^{\ids(\sigma_i)}(\sigma_{ij}) \right)$$
We use $f^{\ids(\sigma_{ij})}(\sigma)$ to ``symmetrically" define the value of 
$f^s(\sigma')$ for $dim(\sigma') = q$, $\vert s \vert = q-1$ and 
$s \subset \ids(\sigma')$. So we have 
\begin{equation}
\label{eq5induction}
\gamma = \partial \sum_{\sigma' \in \sk^{q-2}(\sigma), dim(\sigma') = q-2} f^{\ids(\sigma')}(\sigma)
\end{equation}

Consider now $\lambda$. It is not hard to see that  
$$\lambda = \sum^q_{i=0} (-1)^i \sum_{\sigma' \in \sk^{q-3}(\sigma_i)} f^{\ids(\sigma')}(\sigma_i) =
\sum_{\sigma' \in \sk^{q-3}(\sigma)} \sum_{i \in [q] - \ids(\sigma')} (-1)^i f^{\ids(\sigma')}(\sigma_i)$$
We prove that $\sum_{i \in [q] - \ids(\sigma')} (-1)^i f^{\ids(\sigma')}(\sigma_i)$ is a $(q-1)$-cycle.
Observe that $\sigma'$ is a face of $\sigma_i$.
Fix some $\sigma' \in \sk^{q-3}(\sigma)$.
We consider two cases, $dim(\sigma') = q-3$ and $dim(\sigma') \leq q-4$.

\paragraph{Case $dim(\sigma') = q-3$.}
Assume, without loss of generality, $[q] - \ids(\sigma') = \{ a, b, c \}$ with $a < b < c$.
We have that 
$$\partial \sum_{i \in [q] - \ids(\sigma')} (-1)^i f^{\ids(\sigma')}(\sigma_i) = 
(-1)^c \partial f^{\ids(\sigma')}(\sigma_c) + (-1)^b \partial f^{\ids(\sigma')}(\sigma_b) + (-1)^a \partial f^{\ids(\sigma')}(\sigma_a)$$
Let $\sigma_{ijk}$ denote the face $\sigma_{0 \hdots \widehat i \hdots \widehat j \hdots \widehat k \hdots q}$
of $\sigma$. By induction hypothesis, 
\begin{eqnarray*}
\partial f^{\ids(\sigma')}(\sigma_c) &=& (-1)^{a+b} f^{\ids(\sigma_{bc})}(\sigma_{abc}) + (-1)^{a+b-1} f^{\ids(\sigma_{ac})}(\sigma_{abc})\\
\partial f^{\ids(\sigma')}(\sigma_b) &=& (-1)^{a+c-1} f^{\ids(\sigma_{bc})}(\sigma_{abc}) + (-1)^{a+c-2} f^{\ids(\sigma_{ab})}(\sigma_{abc})\\
\partial f^{\ids(\sigma')}(\sigma_a) &=& (-1)^{b+c-2} f^{\ids(\sigma_{ac})}(\sigma_{abc}) + (-1)^{b+c-3} f^{\ids(\sigma_{ab})}(\sigma_{abc})  
\end{eqnarray*}
and thus
\begin{eqnarray*}
\partial \sum_{i \in [q] - \ids(\sigma')} (-1)^i f^{\ids(\sigma')}(\sigma_i) &=&  (-1)^{a+b+c} f^{\ids(\sigma_{bc})}(\sigma_{abc}) + (-1)^{a+b+c-1} f^{\ids(\sigma_{ac})}(\sigma_{abc})\\
& & + (-1)^{a+b+c-1} f^{\ids(\sigma_{bc})}(\sigma_{abc}) + (-1)^{a+b+c-2} f^{\ids(\sigma_{ab})}(\sigma_{abc})\\
& & + (-1)^{a+b+c-2} f^{\ids(\sigma_{ac})}(\sigma_{abc}) + (-1)^{a+b+c-3} f^{\ids(\sigma_{ab})}(\sigma_{abc}) \\
&=& 0
\end{eqnarray*}

Therefore, $\sum_{i \in [q] - \ids(\sigma')} (-1)^i f^{\ids(\sigma')}(\sigma_i)$ is a $(q-1)$-cycle.
By induction hypothesis, $f^{\ids(\sigma')}(\sigma_i) \in \cC(\cS^{q-1}_{\ids(\sigma_i)})$.
By Lemma \ref{lemmaspheres} and since $\cS^{q-1}_{\ids(\sigma_i)} \subset \cS^q_{0 \hdots q}$, 
there exists a $q$-chain 
$f^{\ids(\sigma')}(\sigma) \in \cC(\cS^q_{0 \hdots q})$ such that 
$\partial f^{\ids(\sigma')}(\sigma) = \sum_{i \in [q] - \ids(\sigma')} (-1)^i f^{\ids(\sigma')}(\sigma_i)$.
We use $f^{\ids(\sigma')}(\sigma)$ to ``symmetrically" define the value of 
$f^s(\sigma'')$ for $dim(\sigma'') = q$, $\vert s \vert = q-2$ and 
$s \subset \ids(\sigma')$. Therefore, 
\begin{equation}
\label{eq2induction}
\sum_{\sigma' \in \sk^{q-3}(\sigma), dim(\sigma')=q-3} \sum_{i \in [q] - \ids(\sigma')} (-1)^i f^{\ids(\sigma')}(\sigma_i) = 
\partial \sum_{\sigma' \in \sk^{q-3}(\sigma), dim(\sigma')=q-3} f^{\ids(\sigma')}(\sigma)
\end{equation}

\paragraph{Case $dim(\sigma') \leq q-4$.} 
By induction hypothesis, for every $i \in [q] - \ids(\sigma')$, 
$$\partial f^{\ids(\sigma')}(\sigma_i) = \sum^{i-1}_{j=0, j \notin \ids(\sigma')} (-1)^j f^{\ids(\sigma')}(\sigma_{ji}) + 
\sum^q_{j=i+1, j \notin \ids(\sigma')} (-1)^{j-1} f^{\ids(\sigma')}(\sigma_{ij})$$
Thus
\begin{eqnarray*}
\partial \sum_{i \in [q] - \ids(\sigma')} (-1)^i f^{\ids(\sigma')}(\sigma_i) &=& 
\sum_{i \in [q] - \ids(\sigma')} \sum^{i-1}_{j=0, j \notin \ids(\sigma')} (-1)^{i+j} f^{\ids(\sigma')}(\sigma_{ji}) \\
& & + \sum_{i \in [q] - \ids(\sigma')} \sum^q_{j=i+1, j \notin \ids(\sigma')} (-1)^{i+j-1} f^{\ids(\sigma')}(\sigma_{ij}) \\
&=& 0
\end{eqnarray*}
Therefore, $\sum_{i \in [q] - \ids(\sigma')} (-1)^i f^{\ids(\sigma')}(\sigma_i)$ is a $(q-1)$-cycle.
By induction hypothesis, $f^{\ids(\sigma')}(\sigma_i) \in \cC(\cS^{q-1}_{\ids(\sigma_i)})$.
By Lemma \ref{lemmaspheres} and since $\cS^{q-1}_{\ids(\sigma_i)} \subset \cS^q_{0 \hdots q}$,
there exists a $q$-chain $f^{\ids(\sigma')}(\sigma) \in \cC(\cS^q_{0 \hdots q})$ such that 
$\partial f^{\ids(\sigma')}(\sigma) = \sum_{i \in [q] - \ids(\sigma')} (-1)^i f^{\ids(\sigma')}(\sigma_i)$.
We use $f^{\ids(\sigma')}(\sigma)$ to ``symmetrically" define the value of 
$f^s(\sigma'')$ for $dim(\sigma'') = q$, $\vert s \vert \leq q-3$ and 
$s \subset \ids(\sigma'')$. Thus, we get 
\begin{equation}
\label{eq3induction}
\sum_{\sigma' \in \sk^{q-4}(\sigma)} \sum_{i \in [q] - \ids(\sigma')} (-1)^i f^{\ids(\sigma')}(\sigma_i) = 
\partial \sum_{\sigma' \in \sk^{q-4}(\sigma)} f^{\ids(\sigma')}(\sigma)
\end{equation}

Combining Equations (\ref{eq2induction}) and (\ref{eq3induction})
\begin{equation}
\label{eq4induction}
\lambda = \partial \sum_{\sigma' \in \sk^{q-3}(\sigma)} f^{\ids(\sigma')}(\sigma)
\end{equation}

Finally, from Equations (\ref{eq0induction}), (\ref{eq5induction}) and (\ref{eq4induction}), we conclude 
$$a(\sigma) - z(\sigma) - d^{\ids(\sigma)}(\partial \sigma) -
\sum_{\sigma' \in \sk^{q-2}(\sigma)} f^{\ids(\sigma')}(\sigma)$$
is a $q$-cycle, hence the lemma holds for faces of dimension $q$.
\end{proof}

\vspace{0.5cm}
\noindent
\emph{Theorem \ref{theo2} (Restated)}
Let $a: \cC(\sigma^n) \to \cC(\cA^n)$ be a non-trivial color-preserving equivariant chain map.
For some set of integers $k_0, \ldots, k_{n-1}$,
\begin{equation*}
a(\partial \sigma^n) \thicksim
\left( 1 + \sum^{n-1}_{q=0} k_q {n+1 \choose q+1} \right) \partial 0^n.
\end{equation*}

\noindent
\begin{proof}
Consider the chain map $z: \cC(\bd(\sigma^n)) \to \cC(\cA^n)$
that maps each simplex $\ang{c_0 \hdots c_i}$ of $\cC(\bd(\sigma^n))$ to
$\ang{(c_0, 0) \ldots (c_i, 0)}$.
Observe that $z(\sigma^n) = \partial 0^n$.
Let $\sigma_i$ denote the oriented face $\ang{0 \hdots \widehat i \hdots n}$ of $\sigma^n$.
Let $S_i$ be the cycle obtained by orienting the $(n-1)$-simplexes of 
$\cS^{n-1}_{0 \hdots \widehat i \hdots n}$ such that its $0$-monochromatic $(n-1)$-simplex is oriented
in increasing $\ids$ order.
By Lemma \ref{lemma1},
$$\alpha_i = a(\sigma_i) - z(\sigma_i) - d^{\ids(\sigma_i)}(\partial \sigma_i) -
\sum_{\sigma \in \sk^{n-3}(\sigma_i)} f^{\ids(\sigma}(\sigma_i)$$
is an $(n-1)$-cycle. Consider the cycle $\alpha_n$. 
By Lemma \ref{lemmahsjacm2}, 
$$\alpha_n \thicksim k_{n-1} S_n$$
for some integer $k_{n-1}$.
It is not hard to see that $\pi^n_i(\sigma_n) = \sigma_i$ and $\pi^n_i(S_n) = S_i$.
Thus, $\pi^n_i(\alpha_n) = \alpha_i$, because $a$, $z$, $d$ and $f$ are equivariant.
Therefore, 
$$\pi^n_i(\alpha_n) = \alpha_i \thicksim k_{n-1} \pi^n_i(S_n) = k_{n-1} S_i$$
and by Lemma \ref{lemmahsjacm3}
$$\alpha_i \thicksim (-1)^i k_{n-1} S_i$$

Considering the alternating sign sum over all
$(n-1)$-faces of $\sigma^n$, we get
$$\sum^n_{i=0} (-1)^i \left( a(\sigma_i) - z(\sigma_i) - d^{\ids(\sigma_i)}(\partial \sigma_i) -
\sum_{\sigma \in \sk^{n-3}(\sigma_i)} f^{\ids(\sigma)}(\sigma_i) \right)
\thicksim \sum^n_{i=0} (-1)^i (-1)^i k_{n-1} \partial 0^n$$
hence
$$a(\partial \sigma^n) - z(\partial \sigma^n) - \sum^{n}_{i=0} (-1)^i d^{\ids(\sigma_i)}(\partial \sigma_i) - 
\sum^{n}_{i=0} (-1)^i \sum_{\sigma \in \sk^{n-3}(\sigma_i)} f^{\ids(\sigma)}(\sigma_i) \thicksim k_{n-1} (n+1) \partial 0^n$$
And since $z(\partial \sigma^n) = \partial 0^n$
$$a(\partial \sigma^n) \thicksim \left(1 + k_{n-1} (n+1) \right) \partial 0^n + 
\sum^{n}_{i=0} (-1)^i d^{\ids(\sigma_i)}(\partial \sigma_i) + 
\sum^{n}_{i=0} (-1)^i \sum_{\sigma \in \sk^{n-3}(\sigma_i)} f^{\ids(\sigma)}(\sigma_i)$$
Notice that if we prove
\begin{equation}
\label{eq1}
\sum^{n}_{i=0} (-1)^i d^{\ids(\sigma_i)}(\partial \sigma_i) \thicksim k_{n-2} {n+1 \choose n-1} \partial 0^n
\end{equation} 
and 
\begin{equation}
\label{eq2}
\sum^{n}_{i=0} (-1)^i \sum_{\sigma \in \sk^{n-3}(\sigma_i)} f^{\ids(\sigma)}(\sigma_i) 
\thicksim \sum^{n-3}_{q=0} k_q {n+1 \choose q+1} \partial 0^n
\end{equation}
then 
$$a(\partial \sigma^n) \thicksim \left(1 + \sum^{n-1}_{q=0} k_q {n+1 \choose q+1} \right) \partial 0^n$$

\paragraph{Proof of equation (\ref{eq1}).}
For $i, j \in [n]$ such that $i < j$, let $\alpha_{ij}$ be 
$(-1)^{i+j} ( d^{\ids(\sigma_j)}(\sigma_{i j}) - d^{\ids(\sigma_i)}(\sigma_{i j}) )$,
where $\sigma_{i j}$ is the $(n-2)$-face $\ang{0 \hdots \widehat i \hdots \widehat j \hdots n}$
of $\sigma^n$. The proof of Lemma \ref{lemma1} shows that 
$$\sum^{n}_{i=0} (-1)^i d^{\ids(\sigma_i)}(\partial \sigma_i) = \sum^n_{i=0} \sum^n_{j+1} \alpha_{ij}$$
and $\alpha_{ij}$ is an $(n-1)$-cycle.

Consider $i, j \in [n]$ such that $i < j < n$. We have that 
\begin{eqnarray*}
\alpha_{ij} &=& (-1)^{i+j} ( d^{\ids(\sigma_j)}(\sigma_{i j}) - d^{\ids(\sigma_i)}(\sigma_{i j}) ) \\
\alpha_{i j+1} &=& (-1)^{i+j+1} ( d^{\ids(\sigma_{j+1})}(\sigma_{i j+1}) - d^{\ids(\sigma_i)}(\sigma_{i j+1}) )
\end{eqnarray*}
It is easy to see that $\pi^{j+1}_j(\sigma_{i j}) = \sigma_{i j+1}$,
$\pi^{j+1}_j(\sigma_j) = \sigma_{j+1}$ and $\pi^{j+1}_j(\sigma_i) = - \sigma_i$.
Thus, $\pi^{j+1}_j(\alpha_{i j}) = \alpha_{i j+1}$, because $d$ is equivariant.
By Lemma \ref{lemmahsjacm2}, for some integer $k_{ij}$,
\begin{equation}
\label{eqx}
\alpha_{ij} \thicksim (-1)^i k_{ij} S_i
\end{equation}
It can be easily proved that $\pi^{j+1}_j(S_i) = -S_i$. 
Applying $\pi^{j+1}_j$ on both sides of Equation (\ref{eqx}) and then multiplying by $-1$,
we get 
\begin{equation}
\label{eqy}
\alpha_{ij+1} \thicksim (-1)^i k_{ij+1} S_i
\end{equation}
By Lemma \ref{lemmahsjacm3} and Equations (\ref{eqx}) and (\ref{eqy}),
$\alpha_{ij} \thicksim k_{ij} \partial 0^n$ and $\alpha_{i j+1} \thicksim k_{ij} \partial 0^n$.
A similar analysis gives that, for every $i, j \in [n]$ such that $i < j-1$,
$\alpha_{ij} \thicksim k_{ij} \partial 0^n$ and $\alpha_{i+1 j} \thicksim k_{ij} \partial 0^n$.

We can repeatedly use these two arguments to prove that  
$\alpha_{ij} \thicksim k_{ij} \partial 0^n$ and $\alpha_{i' j'} \thicksim k_{ij} \partial 0^n$,
for every $i, i', j, j' \in [n]$, $i < j$ and $i' < j'$. Therefore, 
$$\sum^n_{i=0} \sum^n_{j+1} \alpha_{ij} \thicksim {n+1 \choose n-1} k_{n-2} \partial 0^n$$
for some integer $k_{n-2}$.

\paragraph{Proof of equation (\ref{eq2}).}
The argument is very similar to the one used for Equation (\ref{eq1}).
The proof of Lemma \ref{lemma1} shows that 
$$\sum^{n}_{i=0} (-1)^i \sum_{\sigma \in \sk^{n-3}(\sigma_i)} f^{\ids(\sigma)}(\sigma_i) =
\sum_{\sigma \in \sk^{n-3}(\sigma^n)} \sum_{i \in [n] - \ids(\sigma)} (-1)^i f^{\ids(\sigma)}(\sigma_i)$$ 
Also it shows that $\sum_{i \in [n] - \ids(\sigma)} (-1)^i f^{\ids(\sigma)}(\sigma_i)$ is an $(n-1)$-cycle.
For each $\sigma \in \sk^{n-3}(\sigma^n)$, 
let $\alpha_\sigma$ be the cycle $\sum_{i \in [n] - \ids(\sigma)} (-1)^i f^{\ids(\sigma)}(\sigma_i)$.

Consider $\sigma, \sigma' \in \sk^{n-3}(\sigma^n)$ of same dimension
such that for some $P \subset [n]$ and $j \in [n]$, $\ids(\sigma) = P \cup \{ j \}$, $\ids(\sigma') = P \cup \{ j+1 \}$
and $j, j+1 \notin P$. Note
\begin{eqnarray*}
\alpha_\sigma &=& \sum_{i \in [n] - \ids(\sigma)} (-1)^i f^{\ids(\sigma)}(\sigma_i) = (-1)^{j+1} f^{\ids(\sigma)}(\sigma_{j+1}) + \sum_{i \in [n] - P} (-1)^i f^{\ids(\sigma)}(\sigma_i)\\
\alpha_{\sigma'} &=& \sum_{i \in [n] - \ids(\sigma')} (-1)^i f^{\ids(\sigma')}(\sigma_i) = (-1)^j f^{\ids(\sigma')}(\sigma_j) + \sum_{i \in [n] - P} (-1)^i f^{\ids(\sigma')}(\sigma_i)
\end{eqnarray*}
It is easy to see that $\pi^{j+1}_j(\sigma) = \sigma'$, $\pi^{j+1}_j(\sigma_{j+1}) = \sigma_j$ and 
$\pi^{j+1}_j(\sigma_i) = - \sigma_i$ for each $i \in [n] - P$. 
Then, $\pi^{j+1}_j(\alpha_\sigma) = - \alpha_{\sigma'}$, since $f$ is equivariant.

Fix an $i \in [n] - \ids(\sigma)$. By Lemma \ref{lemmahsjacm2}, for some integer $k_\sigma$
\begin{equation}
\label{eqa}
\alpha_\sigma \thicksim (-1)^i k_\sigma S_i
\end{equation}
It can be easily proved that $\pi^{j+1}_j(S_i) = - S_i$.
Applying $\pi^{j+1}_j$ on both sides of Equation (\ref{eqb}) and then multiplying by $-1$, we get
\begin{equation}
\label{eqb}
\alpha_{\sigma'} \thicksim (-1)^i k_\sigma S_i
\end{equation}
By Lemma \ref{lemmahsjacm3} and Equations (\ref{eqa}) and (\ref{eqb}), 
$\alpha_\sigma \thicksim k_\sigma \partial 0^n$ and $\alpha_{\sigma'} \thicksim k_\sigma \partial 0^n$.
We can repeatedly use this argument to prove that 
$\alpha_\sigma \thicksim k_\sigma \partial 0^n$ and $\alpha_{\sigma'} \thicksim k_\sigma \partial 0^n$,
for every $\sigma, \sigma' \in \sk^{n-3}(\sigma^n)$ of same dimension.
Therefore,
$$\sum_{\sigma \in \sk^{n-3}(\sigma^n)} \sum_{i \in [n] - \ids(\sigma)} (-1)^i f^{\ids(\sigma)}(\sigma_i)
\thicksim \sum^{n-3}_{q=0} k_q {n+1 \choose q+1} \partial 0^n$$
\end{proof}

\subsection{Proofs of Section \ref{sec:sufficiency}}

\begin{lemma}
Let $a: \cC(\sigma^n) \rightarrow \cC(\cA^n)$ be the chain map induced
by a chromatic and binary colored subdivision $\chi(\sigma^n)$ without monochromatic $n$-simplexes
produced by the construction in \cite{crpodc08}. Then, $a$ is non-trivial, color-preserving
and equivariant.
\end{lemma}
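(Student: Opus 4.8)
The plan is to realize $a$ as the composite sketched in Section~\ref{sec:sufficiency} and then push the symmetry of $b$ through to equivariance by induction on dimension. Write $a = \mu_2 \circ \mu_1$, where $\mu_1 : \cC(\sigma^n) \to \cC(\chi(\sigma^n))$ is the chain map induced by the subdivision $\chi(\sigma^n)$, and $\mu_2 : \cC(\chi(\sigma^n)) \to \cC(\cA^n)$ is the chain map induced by the vertex map $v \mapsto (\ids(v), b(v))$. This vertex map is a simplicial map into $\cA^n$: it sends no simplex to a degenerate one because $\chi(\sigma^n)$ is chromatic (so $\ids$ is injective on each simplex), and it sends no $n$-simplex outside $\cA^n$ because, by hypothesis, $b$ has no monochromatic $n$-simplex. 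Non-triviality and color-preservation are then immediate: $\mu_1$ carries $\sigma^n$ to the nonzero fundamental chain of $\chi(\sigma^n)$, this chain survives the nondegenerate map $\mu_2$, and every simplex occurring in $a(\sigma)$ is properly colored by $\ids$ restricted to $\ids(\sigma)$.

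The substance is equivariance. I would prove, by induction on $q$, that $a\vert_{\cC(\sk^q(\sigma^n))}$ is equivariant, using a reformulation in terms of coefficients. Since $a$ is color-preserving, for a $q$-face $\sigma$ with $\ids(\sigma) = \{c_0 < \hdots < c_q\}$ one may write $a(\sigma) = \sum_{\beta} k^{(q)}_\beta(\sigma)\, \ang{(c_0,\beta_0) \hdots (c_q,\beta_q)}$, the sum over bit strings $\beta \in \{0,1\}^{q+1}$ (omitting, when $q = n$, the forbidden patterns $\beta = 0^{n+1}$ and $\beta = 1^{n+1}$). A permutation $\pi \in \cS_n$ re-sorts the colors of $\sigma$; this permutes each bit string within its Hamming-weight class and contributes exactly the matching sign. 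Hence it suffices to show that the coefficients $k^{(q)}_\beta(\sigma)$ are \emph{uniform} (independent of the $q$-face $\sigma$) and \emph{weight-symmetric} (dependent only on $|\beta| = \sum_i \beta_i$); I would record this reduction as a preliminary observation, and take "uniform and weight-symmetric at level $q$" as the statement carried by the induction.

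For the base case $q = 0$, $\chi$ restricted to a vertex is that same original vertex of $\sigma^n$, so $a(\ang{c}) = \ang{(c, b(v_c))}$; the symmetry of $b$ makes $b$ constant on the original vertices, giving uniformity, and weight-symmetry is vacuous. For the inductive step, uniformity at level $q$ follows directly from the stated symmetry of $b$: the $b$- and $\rk(\ids)$-preserving simplicial bijections $\mu_{ij} : \chi(\sigma_i) \to \chi(\sigma_j)$ between any two $q$-faces identify, cell for cell, the simplexes of $\chi(\sigma_i)$ mapping onto a given simplex of $\cA^n$ with those of $\chi(\sigma_j)$ mapping onto the corresponding one, so the signed counts $k^{(q)}_\beta$ coincide. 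Weight-symmetry comes from the chain-map identity $\partial\, a(\sigma) = a(\partial \sigma)$: expanding the left-hand side, the simplex of $\cA^n$ with color set $\ids(\sigma) \setminus \{c_j\}$ and bit string $\gamma$ occurs with coefficient $(-1)^j\bigl(k^{(q)}_{\gamma^{j \leftarrow 0}}(\sigma) + k^{(q)}_{\gamma^{j \leftarrow 1}}(\sigma)\bigr)$, where $\gamma^{j\leftarrow b}$ inserts the bit $b$ at position $j$; on the right-hand side, by the induction hypothesis (uniformity and weight-symmetry at level $q-1$), the same simplex occurs with coefficient $(-1)^j k^{(q-1)}_\gamma = (-1)^j h_{|\gamma|}$ for a quantity $h_{|\gamma|}$ depending only on $|\gamma|$. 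Cancelling $(-1)^j$ gives $k^{(q)}_{\gamma^{j\leftarrow 0}} + k^{(q)}_{\gamma^{j\leftarrow 1}} = h_{|\gamma|}$ for every $j$ and $\gamma$. A short combinatorial argument finishes the step: two bit strings of the same weight $w$ are joined by a chain of transpositions of a $0$ and a $1$, and for $1 \le w \le q$ the displayed relation shows each such transposition leaves $k^{(q)}$ unchanged; the classes $w = 0$ and $w = q+1$ are singletons when $q < n$ and do not occur when $q = n$. Thus $k^{(q)}$ is weight-symmetric, closing the induction and yielding equivariance.

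The step I expect to be the main obstacle is the orientation bookkeeping: I must verify that the bijections $\mu_{ij}$ supplied by the construction of \cite{crpodc08} carry the fundamental chain of $\chi(\sigma_i)$ to the fundamental chain of $\chi(\sigma_j)$ \emph{with the correct sign} (so "uniformity" carries no stray sign), and I must track signs and orientations cleanly in the coefficient-matching of $\partial\, a(\sigma) = a(\partial \sigma)$ and in the re-sorting of colors under $\pi$. Once these signs are pinned down, the weight-symmetry argument and the remaining bookkeeping are elementary.
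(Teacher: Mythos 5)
Your proposal is correct and follows the same skeleton as the paper's proof (composite $a=\mu_2\circ\mu_1$, outer induction on $q$ that $a\vert_{\cC(\sk^q(\sigma^n))}$ is equivariant, reduction to a statement about coefficients $k_\beta(\sigma)$, then uniformity from the symmetry of $b$ plus weight-symmetry from $\partial a(\sigma)=a(\partial\sigma)$). The paper records weight-symmetry as its Proposition~\ref{prop2} and proves it by an \emph{inner} induction on the Hamming weight $k$: for a fixed weight-$k$ simplex $\tau$ it compares coefficients of $a(\sigma_q)$ and $a(\sigma_i)$ via the permutations $\pi^q_i$, uses the inner hypothesis to identify the weight-$(k-1)$ terms $k_{\inv(\tau,q)}$ and $k_{\inv(\rho,i)}$, and then closes with a counting argument showing that the orbit sets $M_\tau=\{\pi^q_i(\tau)\}$ must overlap inside $L_{q,k}$. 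Your route replaces this inner induction and counting step with a direct observation: the relation $k^{(q)}_{\gamma^{j\leftarrow 0}}+k^{(q)}_{\gamma^{j\leftarrow 1}}=h_{|\gamma|}$ applied twice --- once flipping the $1$-bit of $\beta$ at position $a$, once flipping the $1$-bit of $\beta'$ at position $b$ --- lands on the \emph{same} weight-$(w-1)$ string, so $k_\beta=k_{\beta'}$ without needing to know anything about that lower-weight coefficient. This is cleaner: it avoids the overlap-counting inequality $\frac{q+1}{k(q+2-k)}<1$ (which in fact degenerates to equality at $k=1$ and $k=q+1$ and would need the further remark that the $M_\tau$-intersection graph is connected), and it avoids the inner induction altogether. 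What you do not buy is anything new for the uniformity half or the sign bookkeeping: like the paper, you derive uniformity directly from the symmetry of $b$ and take for granted that the $\mu_{ij}$ bijections, the boundary-matching, and the re-sorting under a general $\pi\in\cS_n$ all produce compatible signs. You flag this yourself as the remaining obstacle; the paper is equally terse there, so the two proofs share that gap rather than one closing it.
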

\begin{proof}
Let $\cO^n$ be the complex $\cA^n$ with the monochromatic simplexes
$\set{(0,0), \hdots, (n,0)}$ and $\{ (0,1), \hdots, \\ (n,1) \}$.
Observe that $a$ is also a chain map $\cC(\sigma^n) \rightarrow \cC(\cO^n)$.
For technical reasons, we think of $a$ in this way.

First, since the subdivision $\chi(\sigma^n)$ is chromatic, clearly $a$ is non-trivial and color-preserving.
By induction on $q$, we prove the following proposition.

\begin{proposition}
\label{prop1}
The restriction $a \vert_{\cC(\sk^q(\sigma^n))}$, $0 \leq q \leq n$, is equivariant.
\end{proposition}

By symmetry of the binary coloring of $\chi(\sigma^n)$,
Proposition \ref{prop1} clearly holds for $q=0$. Suppose that Proposition \ref{prop1} holds for dimension $q-1$.
We prove it holds for dimension $q$.

By symmetry of the binary coloring of $\chi(\sigma^n)$,
for the face $\sigma = \ang{0 \hdots q}$ of $\sigma^n$
we have that $a \circ \pi^q_i(\sigma_q) = a(\sigma_i) = \pi^q_i \circ a(\sigma_q)$,
where $0 \leq i \leq q$ and $\sigma_i = \ang{0 \hdots \widehat i \hdots q}$. 
Therefore, if we prove that $\pi \circ a(\sigma_q) = a \circ \pi(\sigma_q)$ for every $\pi \in \cS_n$,
then $\pi \circ a(\sigma_i) = a \circ \pi(\sigma_i)$, since $a(\sigma_i) = \pi^q_i \circ a(\sigma_q)$.

Consider the face $\sigma = \ang{0 \hdots q}$ of $\sigma^n$.
Let $L_q$ be $\{ \tau \vert \tau \in \sk^q(\cO^n) \hbox{ and } \ids(\tau) = [q] \}$.
For $\tau \in L_q$, let $\#1(\tau)$ be the number of its vertexes with binary color $1$,
and let $\inv(\tau, i)$, $0 \leq i \leq q$, denote the simplex of $L_q$ with the same
vertexes as $\tau$ but with the vertex with $\id$ $i$ having the opposite binary coloring
to the binary coloring of the vertex with $\id$ $i$ of $\tau$.
For $0 \leq k \leq q+1$, let $L_{q,k}$ denote the set $\{ \tau \vert \tau \in L_q \hbox{ and } \#1(\tau) = k \}$. 
Thus $\vert L_{q,k} \vert = {q+1 \choose k}$.
Since $a$ is color-preserving, we can write 
$$a(\sigma) = \sum_{\tau \in L_q} k_{\tau} \tau$$
where $k_{\tau} \in \mathbb{Z}$. Obviously if $q = n$ then
$k_{ \{ (0,0), \hdots, (n,0) \} } = k_{ \{ (0,1), \hdots, (n,1) \} } = 0$,
since $\cA^n$ does not have monochromatic $n$-simplexes.
We prove the following proposition.

\begin{proposition}
\label{prop2}
For every $\tau, \tau' \in L_{q, k}$, $k_\tau = k_{\tau'}$, $0 \leq k \leq q+1$. 
\end{proposition}

For example, for $\sigma = \ang{012}$ and $k=2$, Proposition \ref{prop2} says that if 
$\ang{(0,0) (1,1) (2,1)}$ appears in $a(\sigma)$ with coefficient $\ell$,
then $\ang{(0,1) (1,1) (2,0)}$ and $\ang{(0,1) (1,0) (2,1)}$ appear in $a(\sigma)$
with coefficient $\ell$ too. It is not hard to see that this proves 
$a \circ \pi(\sigma) = \pi \circ a(\sigma)$ for every $\pi \in \cS_n$, 
hence Proposition \ref{prop1} holds for $q$.

We proceed by induction on $k$. For $k=0$ we have that $\vert L_{q,k} \vert = 1$,
thus Proposition \ref{prop2} trivially holds. Suppose Proposition \ref{prop2} holds for $k-1$. 
We prove it holds for $k$. 

Notice that
$$\partial a(\sigma) = \sum_{\tau \in L_q} k_{\tau} \partial \tau = 
\sum_{\tau \in L_q} k_{\tau} \sum ^q_{i=0} (-1)^i \tau_i = 
a(\partial \sigma) = \sum^q_{i=0} (-1)^i a(\sigma_i)$$
where $\tau_i = \langle (0,b_0) \hdots \widehat{(i,b_i)} \hdots (q,b_q) \rangle$
for $\tau = \langle (0,b_0) \hdots (i,b_i) \hdots (q,b_q) \rangle$.
Consider a simplex $\tau \in L_q$ and $i \in \{ 0, \hdots, q\}$.
Observe that the $(q-1)$-simplex $\tau_i$ appears in $\partial a(\sigma)$
with coefficient $(-1)^i (k_\tau + k_{\inv(\tau, i)})$,
since $\tau_i$ is face of $\tau$ and $\inv(\tau, i)$.
Moreover, $\tau_i$ appears in $a(\sigma_i)$ with
coefficient $k_\tau + k_{\inv(\tau, i)}$,
because $\partial a(\sigma) = a(\partial \sigma)$ and $a$ is color-preserving.
Also notice that either $\#1(\tau) = \#1(\tau_i)$ and $\#1(\inv(\tau, i)) = \#1(\tau_i) + 1$,
or $\#1(\tau) = \#1(\tau_i) + 1$  and $\#1(\inv(\tau, i)) = \#1(\tau_i)$.

Consider the set $N = \{ \tau \vert \tau \in L_{q, k} \hbox{ and } \#1(\tau_q) = k-1 \}$.
Note $\vert N \vert = {q \choose k-1}$. 
For each $\tau \in N$,
observe that $\#1(\inv(\tau, q)) = k-1$, hence $\inv(\tau, q) \in L_{q,k-1}$.
Consider a simplex $\tau \in N$.
As noticed above, $\tau_q$ appears in $a(\sigma_q)$
with coefficient $k_{\tau} + k_{\inv(\tau, q)}$. 
Consider $i \in \{ 0, \hdots, q \}$.
Let $\rho_i$ and $\rho$ be the simplexes
$\pi^q_i(\tau_q)$ and $\pi^q_i(\tau)$.
Observe that $\rho_i$ is a face of $\rho$,
$\#1(\rho_i) = k-1$ and $\#1(\rho) = k$.
As for $\tau_q$, we have that $\rho_i$ appears in 
$a(\sigma_i)$ with coefficient $k_{\rho} + k_{\inv(\rho, i)}$,
where $\sigma_i = \pi^q_i(\sigma_q)$.
By the induction hypothesis,
$a \vert_{\cC(\sk^{q-1}(\sigma^n))}$ is equivariant,
hence $a \circ \pi^q_i(\sigma_q) = a(\sigma_i) = \pi^q_i \circ a(\sigma_q)$.
Therefore,  $k_{\tau} + k_{\inv(\tau, q)} = k_{\rho} + k_{\inv(\rho, i)}$.
Moreover, $k_{\inv(\tau, q)} = k_{\inv(\rho, i)}$
because $\#1(\inv(\tau, q)) = \#1(\inv(\rho, i)) = k-1$ and, by the induction hypothesis,
Proposition \ref{prop2} holds for $k-1$.
Thus, we get $k_\tau = k_\rho$.

For each $\tau \in N$, let $M_\tau$ be $\{ \pi^q_i(\tau) \vert 0 \leq i \leq q \}$.
The previous paragraph proved that for every $\rho, \rho' \in M_\tau$, $k_\rho = k_{\rho'}$.
It is not hard to see that $\vert M_{\tau} \vert = (q+1) - (k-1)$ for every $\tau \in N$, 
and $L_{q,k} = \cup_{\tau \in N} M_\tau$.
Moreover, we have that the sets $M_\tau$'s are not a partition of $L_{q,k}$ because  
$$\frac{ {q+1 \choose k} }{ ((q+1)-(k-1)) {q \choose k-1} } = \frac{ q+1 }{ k((q+1)-(k-1)) } < 1$$
Thus, these sets intersect each other, hence $\tau, \tau' \in L_{q, k}$, $k_\tau = k_{\tau'}$.
This completes the proof.
\end{proof}

\end{document}